\documentclass[11pt]{article}

\usepackage[ruled, vlined, linesnumbered, noresetcount]{algorithm2e}
\usepackage{times}
\usepackage{epsfig}
\usepackage{amsthm}
\usepackage{amsmath}
\usepackage{amsfonts}
\usepackage{amssymb}
\usepackage{enumerate}
\usepackage{graphics}
\usepackage{color}
\usepackage{authblk}

\SetKwFor{ForAll}{forall}{do}{end forall}

\textwidth 14.5cm
\textheight 19.5cm
\oddsidemargin 0.7cm
\evensidemargin 0.7cm

\newtheorem{theorem}{Theorem}
\newtheorem{lemma}{Lemma}
\newtheorem{prop}{Proposition}
\newtheorem{definition}{Definition}

\newtheorem{claim}{Claim}
\newtheorem{remk}{Remark}
\newtheorem{proper}{Property}


\newcommand{\remove}[1]{}

\newcommand{\tm}[1]{round(#1)}
\newcommand{\pth}[1]{path(#1)}
\newcommand{\A}[1]{Act(#1)}

\newcommand{\Acttt}[3]{{\sf Active}[#1,#2,#3]}
\newcommand{\Actt}[2]{{\sf Active}[#1,#2]}

\newcommand{\tpth}[1]{maxPath(#1)}
\newcommand{\tpths}[2]{maxPath_{#2}(#1)}

\newcommand{\tms}[2]{round_{#2}(#1)}
\newcommand{\pths}[2]{path_{#2}(#1)}

\newcommand{\N}{{\mathbb{N}}}
\newcommand{\Active}{{{\sf Active}}}

\newcommand*{\cw}{\mathrm{cw}}


\begin{document}
\title{Latency-Bounded Target Set Selection in Social Networks\thanks{An extended abstract of this paper will appear in 
Proceedings of Computability in Europe 2013 (CiE 2013), 
The Nature of Computation: Logic, Algorithms, Applications, 
Lectures Notes in Computer Science, Springer.}}

%
%

\author[1]{F. Cicalese}

\author[2]{G. Cordasco}

\author[1]{L. Gargano}

\author[3]{M. Milani\v{c}}

\author[1]{U. Vaccaro}

\affil[1]{Dept.~ of Computer Science, University of Salerno, Italy, {\texttt{\{cicalese,lg,uv\}@dia.unisa.it}}}
\affil[2]{Dept.~of Psychology, Second University of Naples, Italy, {\texttt {gennaro.cordasco@unina2.it}}}
\affil[3]{University of Primorska, UP IAM and UP FAMNIT, 
SI 6000 Koper, Slovenia,  \texttt{martin.milanic@upr.si}}


\maketitle
\begin{abstract}
Motivated by applications in sociology, economy and medicine,
 we  study variants of the Target Set Selection problem,
first proposed by Kempe, Kleinberg and Tardos.
In our scenario one is given a graph $G=(V,E)$,
integer values $t(v)$ for each vertex $v$ (\emph{thresholds}), and the objective is to determine
a small set of vertices
(\emph{target set}) that activates a given number
(or a given subset) of vertices of $G$ \emph{within} a prescribed
number of rounds. The activation process in $G$ proceeds as follows: initially,
at round 0, all vertices in the target set
are activated;
subsequently at each round $r\geq 1$ every vertex
of $G$ becomes activated if at least $t(v)$ of its neighbors
are already active by round $r-1$.
It is known that the problem of finding a minimum cardinality
Target Set that eventually activates the whole graph $G$ is hard to approximate to a factor better than
 $O(2^{\log^{1-\epsilon }|V|})$.
In this paper we give \emph{exact} polynomial time algorithms
to find minimum cardinality
Target Sets in
 graphs of bounded clique-width, and \emph{exact}
linear time algorithms for trees.
%
\end{abstract}

\section{Introduction}
Let $G = (V,E)$ be a graph, $S \subseteq V$, and let $t: V \longrightarrow \N = \{1,2,\ldots\}$ be a
function assigning integer thresholds to the vertices of $G$. An {\em activation process in $G$ starting at $S$}
is a sequence  $\Active[S,0] \subseteq \Active[S,1] \subseteq \ldots\subseteq \Active[S,i] \subseteq \ldots \subseteq V$
of vertex subsets, with
$\Active[S,0] = S$, and such that for all $i > 0$,
$$\Active[S,i] = \Active[S,i-1]\cup \Big\{u \,:\, \big|N(u)\cap \Active[S,i - 1]\big|\ge t(u)\Big\}\,$$
where $N(u)$ is the set of neighbors of $u$.
In words, at each round  $i$ the set of active nodes is
augmented by the set of  nodes $u$ that have a number of
\emph{already} activated neighbors greater or equal to
 $u$'s threshold $t(u)$.
The central problem
we introduce and study in this paper
 is defined as follows:

\medskip
\noindent {\sc $(\lambda, \beta, \alpha)$-Target Set Selection ($(\lambda, \beta, \alpha)$-TSS)}.\\
{\bf Instance:} A graph $G=(V,E)$, thresholds $t:V\longrightarrow \mathbb{N}$, a latency bound $\lambda\in \N$,
a budget $\beta \in \N$ and an activation requirement $\alpha\in \N$.\\
{\bf Problem:} Find  $S\subseteq V$ s.t.~$|S|\le \beta$ and $|\Active[S,\lambda]|\ge \alpha$ (or determine that no such a set exists).

%

We will be also interested in the case in which {\em a set of nodes that need to be activated} (within the given latency bound)
is explicitly given as part of the input.

\medskip
\noindent {\sc $(\lambda, \beta, A)$-Target Set Selection ($(\lambda, \beta, A)$-TSS)}.\\
{\bf Instance:} A graph $G=(V,E)$, thresholds $t:V\longrightarrow \mathbb{N}$, a latency bound $\lambda\in \N$,
a budget $\beta \in \N$ and a set to be activated $A\subseteq V$.\\
{\bf Problem:} Find a set $S\subseteq V$ such that $|S|\le \beta$ and
{$A\subseteq \Active[S,\lambda]$} (or determine that such a set does not exist).

{Eliminating} any one of the parameters $\lambda$ and $\beta$, one  obtains
 two natural minimization problems.
For instance, eliminating  $\beta$, one  obtains the following problem:

\medskip
\noindent {\sc $(\lambda, A)$-Target Set Selection ($(\lambda, A)$-TSS)}.\\
{\bf Instance:} A graph $G=(V,E)$, thresholds $t:V\longrightarrow \mathbb{N}$, a latency bound $\lambda\in \N$
and a set 
$A\subseteq V$.\\
{\bf Problem:} Find a set $S\subseteq V$ of minimum size such that
{$A\subseteq \Active[S,\lambda]$}.

\begin{sloppypar}
{Notice that in the above problems we may assume without loss of generality that $0{\le} t(u){\le} d(u){+}1$ holds for all nodes $u{\in} V$ (otherwise,
we can set $t(u) {=} d(u){+}1$ for every node $u$ with threshold exceeding its degree plus one without changing the problem).}
\end{sloppypar}

The above algorithmic problems have roots in the general study
of the \emph{spread of influence}
in Social Networks (see  \cite{EK} and references quoted therein).
For instance, in the area of viral marketing \cite{DR-01,DNT12} companies  wanting to
promote products or behaviors might try initially  to target and convince
a few individuals which, by word-of-mouth effects, can  trigger
a  cascade of influence in the network, leading to
an  adoption  of the products by  a much larger number of individuals.
It is clear that the $(\lambda, \beta, \alpha)$-TSS problem represents
an abstraction of that scenario, once one makes the reasonable assumption  that an individual
decides to adopt the products if a certain number of {his/her friends} have adopted
said products. Analogously, the $(\lambda, \beta, \alpha)$-TSS problem can describe  other
diffusion problems  arising in sociological, economical  and biological  networks,
again see   \cite{EK}.
Therefore,  it comes as no surprise that
{special  cases of our} problem (or  variants thereof) have recently
attracted much attention by the algorithmic community.
In this version of the paper  we shall limit ourselves  to discuss the work which is
strictly  related to the present paper
(we just mention that our results are also relevant to other  areas,
like   dynamic monopolies \cite{FKRRS-2003,Peleg-02}, for instance).
The first authors to study problems of spread of influence in networks
from an algorithmic point of view were Kempe \emph{et al.} \cite{KKT-03,KKT-05}.
However, they were mostly interested in networks with  randomly chosen thresholds.
Chen \cite{Chen-09} studied the following minimization problem:
Given a graph $G$ and fixed thresholds $t(v)$, find
a target set of minimum size that eventually activates
all (or a fixed fraction of) vertices of $G$.
He proved  a  strong inapproximability result that makes unlikely the existence
of an  algorithm with  approximation factor better than  $O(2^{\log^{1-\epsilon }|V|})$.
Chen's result stimulated the work \cite{ABW-10,BHLM-11,Chopin-12}.
In particular, in \cite{BHLM-11}, Ben-Zwi {\em et al.}
proved that the $(|V|, \beta, \alpha)$-TSS problem can be
solved in time $O(t^w|V|)$ where $t$ is the maximum threshold
and $w$ is the treewidth of the graph, thus  showing that this variant of the problem is fixed-parameter
tractable if parameterized w.r.t. both treewidth and the maximum degree
of the graph.
Paper \cite{Chopin-12} isolated other interesting cases in which
the problems become efficiently tractable.

{
All the above mentioned papers did not consider the issue of
the number of rounds necessary for the activation of the required
number of vertices. However, this is a relevant question: In viral marketing,
for instance, it is quite important to spread information quickly.
It is equally important, before embarking on a possible
onerous investment,
to try  estimating the maximum amount of influence spread that can
be guaranteed within a certain amount of time (i.e, for
some $\lambda$ fixed in advance),
rather than simply knowing that eventually (but maybe too late)
the whole market might be covered. These considerations motivate our first generalization of
the problem, parameterized on the number of rounds $\lambda.$
The practical relevance of parameterizing the problem also
with  bounds on the initial budget or the final requirement should
be equally evident.

For general graphs, Chen's  \cite{Chen-09} inapproximability result
still holds if one demands that the activation process ends in a
bounded number of rounds.
We show that  the general $(\lambda, \beta, \alpha)$-TSS problem
 is polynomially solvable in graph of bounded clique-width and constant
 latency bound $\lambda$ (see Theorem \ref{thm1} in Section \ref{sec:cwd}).
 Since graphs of bounded treewidth are also of bounded clique-width \cite{CR05},
 this result implies a polynomial solution of the $(\lambda,\beta,\alpha)$-TSS
 problem with constant $\lambda$ also for graphs of bounded treewidth,
 complementing the result of \cite{BHLM-11} showing that for bounded-treewidth graphs,
 the TSS problem without the latency bound (equivalently, with $\lambda = |V|-1$) is polynomially solvable.
Moreover, the result settles the status of the computational complexity of the {\sc Vector Domination} problem
for graphs of bounded tree- or clique-width, that was posed as an open question in~\cite{CMV12+}.

We also consider the instance  when $G$ is a tree.  For this special case we
give  an \emph{exact linear time} algorithm for the $(\lambda, A)$-TSS  problem,
\emph{for any}  $\lambda$ and $A \subseteq V$.  When  $\lambda = |V|-1$ and $A = V$
our result is equivalent  to the (optimal)
linear time
algorithm  for the classical TSS problem (i.e., without the latency bound) on trees proposed in \cite{Chen-09}.

}
\remove{
All the above mentioned papers did not consider the issue of the
number of steps necessary  for the activation of the required number of vertices. However,
this is a relevant question: in viral marketing, for instance, it is    quite
important to spread  information quickly.
Chen's  \cite{Chen-09} inapproximability result still holds if one requires that the
activation process ends in a \emph{bounded} number of rounds;
this motivates our first result which shows  that the general $(\lambda, \beta, \alpha)$-TSS problem is
polynomially solvable in graph of bounded clique-width and constant latency bound $\lambda$
(see Theorem \ref{thm1} in Section~\ref{sec:cwd}).

Clique-width of graphs is an  important  parameter in theoretical computer science,
because many algorithmic problems
that are generally {\sf NP}-hard admit polynomial time solutions when restricted to graphs of bounded clique-width.
Clique-width is also related to the popular graph parameter treewidth
and  can be considered to be more general than treewidth;  in fact,
graphs with bounded treewidth also have bounded clique-width \cite{CR05},  but not vice-versa.
We also consider the special case when $G$ is a {\em tree} and show a {\em linear time} exact
algorithm for the $(\lambda, A)$-TSS problem.


Because of space constraints, some proofs are (partially) deferred to the appendix.
}

\remove{
\section{Problem Definitions}

 For a graph $G=(V,E)$ and a vertex $v\in V$,
denote by $N(v)$ (or $N_G(v)$, if the graph is not clear from the context)
the set of neighbors of $v$
by $d(v)=d_G(v)$
the degree of $v$, and by $\Delta(G)$  the maximum degree of any vertex in $G$.

Let $G = (V,E)$ be a graph, $S \subseteq V$, and let $t: V \longrightarrow \N = \{1,2,\ldots\}$ be a
function assigning integer thresholds to the
vertices of $G$. An {\em activation process in $G$ starting at $S$}
is a sequence of vertex subsets $\Active[S,0] \subseteq \Active[S,1] \subseteq \ldots
\subseteq \Active[S,i] \subseteq \ldots \subseteq V$, with
$\Active[S,0] = S$, and such that for all $i > 0$, $$\Active[S,i] = \Active[S,i-1]\cup \Big\{u \,:\, \big|N(u)\cap \Active[S,i - 1]\big|\ge t(u)\Big\}\,.$$
We say that $v$ {\em is activated} at round $i>0$ if $v \in  \Active[S,i]\setminus \Active[S,i - 1]$.

The central problem we are interested in is defined as follows:

\noindent {\sc $(\lambda, \beta, \alpha)$-Target Set Selection ($(\lambda, \beta, \alpha)$-TSS)}.\\
{\bf Instance:} A graph $G=(V,E)$, thresholds $t:V\longrightarrow \mathbb{N}$, a latency bound $\lambda\in \N$,
a budget $\beta \in \N$ and an activation requirement $\alpha\in \N$.\\
{\bf Problem:} Find  $S\subseteq V$ s.t. $|S|\le \beta$ and $|\Active[S,\lambda]|\ge \alpha$ (or determine that no such a set exists).

%
%

We will be also interested in the case in which {\em a set of nodes that need to be activated} (within the given latency bound)
is explicitly given as part of the input.

\noindent {\sc $(\lambda, \beta, A)$-Target Set Selection ($(\lambda, \beta, A)$-TSS)}.\\
{\bf Instance:} A graph $G=(V,E)$, thresholds $t:V\longrightarrow \mathbb{N}$, a latency bound $\lambda\in \N$,
a budget $\beta \in \N$ and a set to be activated $A\subseteq V$.\\
{\bf Problem:} Find a set $S\subseteq V$ such that $|S|\le \beta$ and
$A\subseteq \Active[S,\lambda]$ (or determine that such a set does not exist).

Eliminating any one of the numerical parameters $\lambda$ and $\beta$, we obtain two natural minimization problems.
For instance, eliminating  $\beta$, we obtain the following problem:

\noindent {\sc $(\lambda, A)$-Target Set Selection ($(\lambda, A)$-TSS)}.\\
{\bf Instance:} A graph $G=(V,E)$, thresholds $t:V\longrightarrow \mathbb{N}$, a latency bound $\lambda\in \N$
and a set 
$A\subseteq V$.\\
{\bf Problem:} Find a set $S\subseteq V$ of minimum size such that
$A\subseteq \Active[S,\lambda]$.\\
}

\section{TSS Problems on Bounded Clique-Width Graphs}\label{sec:cwd} 

In this section, we give an algorithm for the
{\sc $(\lambda, \beta, \alpha)$-Target Set Selection} problem
on graphs $G$ of clique-width at most $k$ given by an irredundant $k$-expression $\sigma$.
For the sake of self-containment we recall here some basic notions about clique-width.

\medskip
\noindent
{\bf The clique-width of a graph.} A {\em labeled graph} is a graph in which every vertex has a label from $\mathbb N$. A labeled graph is a {\em $k$-labeled graph} if every label is from $[k]:=\{1,2,\ldots,k\}$.
The {\em clique-width} of a graph $G$ is the minimum number of labels needed to construct $G$ using
the following four operations:
(i) Creation of a new vertex $v$ with label $a$ (denoted by $a(v)$);
(ii) disjoint union of two labeled graphs $G$ and $H$ (denoted by $G\oplus H$);
(iii) Joining by an edge each vertex with label $a$ to each vertex with label $b$
($a\not= b$, denoted by $\eta_{a,b}$);
(iv) renaming label $a$ to $b$
(denoted by $\rho_{a\to b}$).
Every graph can be defined by an algebraic expression using these four operations.
For instance, a chordless path on five consecutive vertices $u,v,x,y,z$ can be defined as follows:\\
$\eta_{3,2}(3(z)\oplus\rho_{3\to 2}(\rho_{2\to 1}(\eta_{3,2}
(3(y)\oplus\rho_{3\to 2}(\rho_{2\to 1}(\eta_{3,2}(3(x)\oplus\eta_{2,1}(2(v)\oplus 1(u))))))))).
$\\
Such an expression is called a $k$-{\em expression} if it uses at most $k$ different labels. The clique-width of $G$, denoted
$\cw(G)$, is the minimum $k$ for which there exists a $k$-expression defining $G$.
If a graph $G$ has a clique-width at most $k$, then a $(2^{k+1}- 1)$-expression for it can be computed in time $O(|V(G)|^3)$
using the rank-width~\cite{Hlineny-Oum-08,Oum-Seymour-06}.

Every graph of clique-width at most $k$ admits an {\em irredundant} $k$-expression, that is, a $k$-expression such that before any operation of the form $\eta_{a,b}$ is applied, the graph contains no edges between vertices
with label $a$ and vertices with label $b$~\cite{CO00}. In particular, this means that every operation $\eta_{a,b}$ adds at least one edge to the graph $G$.
Each expression $\sigma$ defines a rooted tree $T(\sigma)$, that we also call a {\it clique-width tree}.

\medskip
\noindent
{\bf Our result on graphs with bounded clique-width. }
We describe an algorithm for the {\sc $(\lambda, \beta, \alpha)$-TSS} problem
on graphs $G$ of clique-width at most $k$ given by an irredundant $k$-expression $\sigma$. Denoting by $n$
the number of vertices
of the input graph $G$, the running time of the algorithm is  bounded by
{$O(\lambda k|\sigma|{(n+1)^{(3\lambda +2)k}})$},
where $|\sigma|$ denotes the
{encoding length} of $\sigma$. For fixed $k$ and $\lambda$, this is polynomial in the size of the input.
We will first solve the following {\it decision problem} naturally associated with the
{\sc $(\lambda, \beta, \alpha)$-Target Set Selection} problem:

\medskip
\begin{sloppypar}
\noindent {\sc $(\lambda, \beta, \alpha)$-Target Set Decision ($(\lambda, \beta, \alpha)$-TSD)}.\\
{\bf Instance:} A graph $G=(V,E)$, thresholds $t:V\longrightarrow \mathbb{N}$, a latency bound $\lambda\in \N$,
a budget $\beta \in \N$ and an activation requirement $\alpha\in \N$.\\
{\bf Problem:} Determine whether there exists a set $S\subseteq V$ such
that $|S|\le \beta$ and \hbox{$|\Active[S,\lambda]|\ge \alpha$}.
\end{sloppypar}

Subsequently, we will argue how to modify the algorithm in order to solve the {\sc $(\lambda, \beta, \alpha)$-} and  the 
{\sc $(\lambda, \beta, A)$-Target Set Selection} problems.

Consider an instance $(G,t,\lambda, \beta, \alpha)$ to the {\sc $(\lambda, \beta, \alpha)$-Target Set Decision} problem,
where $G=(V,E)$ is a graph of clique-width at most $k$ given by an irredundant $k$-expression~$\sigma$.
We will develop a dynamic programming algorithm that will traverse the clique-width tree bottom up and simulate the activation process for the corresponding induced subgraphs of $G$, keeping track only of the minimal necessary information, that is, of how many vertices of each label become active in each round. For a bounded number of rounds $\lambda$, it will be possible to store and analyze the information in polynomial time. In order to compute these values recursively with respect to all the operations in the definition of the clique-width--including operations of the form $\eta_{a,b}$--we need to consider not only the original thresholds, but also reduced ones. This is formalized in Definition \ref{def1} below. We view $G$ as a $k$-labeled graph defined by $\sigma$. Given a $k$-labeled graph $H$ and a label $\ell\in [k]$, we denote by $V_\ell(H)$ the set of vertices of $H$ with label $\ell$.

\remove{
Subsequently, we will argue how to modify the algorithm in order to solve the {\sc $(\lambda, \beta, \alpha)$-} and  the 
{\sc $(\lambda, \beta, A)$-Target Set Selection} problems.
Consider an instance $(G,t,\lambda, \beta, \alpha)$ to the {\sc $(\lambda, \beta, \alpha)$-Target Set Decision} problem, where $G=(V,E)$
is a graph of clique-width at most $k$ given by an irredundant $k$-expression~$\sigma$.
In order to describe the algorithm, we need to introduce some notations and definitions.
We view $G$ as a $k$-labeled graph defined by $\sigma$. Given a $k$-labeled graph $H$ and a label $\ell\in [k]$, we denote by
$V_\ell(H)$ the set of vertices of $H$ with label $\ell$.
 The algorithm will rely on the following notion:
}

\begin{sloppypar}
\begin{definition} \label{def1}
Given a $k$-labeled subgraph $H$ of $G$ and a pair of matrices with non-negative integer entries
$({\boldsymbol\alpha},{\boldsymbol r})$ such that
${\boldsymbol\alpha} \in (\mathbb{Z}_+)^{[0,\lambda]\times [k]}$
(where $[0,\lambda]:= \{0,1,\ldots,\lambda\}$) and
${\boldsymbol r} \in (\mathbb{Z}_+)^{[\lambda]\times [k]}$,
an {\em $({\boldsymbol\alpha},{\boldsymbol r})$-activation process for $H$} is a non-decreasing
 sequence of vertex subsets $S[0]  \subseteq \ldots \subseteq S[\lambda]\subseteq V(H)$
such that the following conditions hold:  
\begin{enumerate}[(1)]
\item For every round $i \in [\lambda]$ and for every label $\ell\in [k]$,
  the set of all vertices with label $\ell$ activated at round $i$ is obtained with respect to the activation process
  starting at $S[0]$ with thresholds $t(u)$ reduced by $r[i,\ell]$ for all vertices with label $\ell$.
  Formally, for all $\ell\in [k]$
  and
  all $i \in [\lambda]$,
    $$(S[i] \setminus  S[i-1])\cap V_\ell(H) = \Big\{u \in V_\ell(H)\setminus  S[i-1]\,:\, \big|N_H(u)\cap S[i - 1]\big|\ge t(u)-r[i,\ell]\Big\}\,.$$
  \item For every label  $\ell\in [k]$, there are exactly $\alpha[0,\ell]$ initially activated vertices with label $\ell$: $|S[0]\cap  V_\ell(H)| = \alpha[0,\ell]\,.$
  \item For every label $\ell\in [k]$
  and for every round $i \in [\lambda]$, there are exactly $\alpha[i,\ell]$ vertices with label $\ell$ activated at round $i$: $\left|\left(S[i] \setminus  S[i-1]\right)\cap V_\ell(H)\right| = \alpha[i,\ell]\,.$
  \end{enumerate} 
\end{definition}
\end{sloppypar}

\remove{
The above definition will be essential for our dynamic programming algorithm, which will traverse
the clique-width tree bottom up and simulate the activation process for the corresponding induced subgraphs of $G$,
keeping track only of the minimal necessary information, that is, of how many vertices of each label become active in each round.
Operations of the form $\eta_{a,b}$ are the ones that require us to consider not only the original thresholds, but also the reduced ones.
For a bounded number of rounds $\lambda$, it will be possible to store and analyze the information
in polynomial time.
}

\begin{sloppypar}
Let ${\cal A}$ denote the set of all matrices of the form
${\boldsymbol\alpha} = \left(\alpha[i,\ell]\,:0\le i\le \lambda\,, 1\le \ell \le k\right)$
where $\alpha[i,\ell]\in [0,\alpha]$ for all $0\le i\le \lambda$ and all $1\le \ell \le k$.
Notice that $|{\cal A}| = (\alpha+1)^{(\lambda+1)k} = {O((n+1)^{(\lambda+1)k})}$.
Similarly, let ${\cal R}$ denote the set of all
matrices of the form
${\boldsymbol r} = \left(r[i,\ell]\,:1\le i\le \lambda\,, 1\le \ell \le k\right),$
where
$r[i,\ell]\in [0,n]$ for all $1\le i\le \lambda$ and all $1\le \ell\le k$.
Then \hbox{$|{\cal R}| = {(n+1)^{\lambda k}}$}.
\end{sloppypar}

Every node of the clique-width tree $T := T(\sigma)$ of the input graph $G$ corresponds to a $k$-labeled subgraph $H$ of $G$.
To every node of $T$ (and the corresponding $k$-labeled subgraph $H$ of $G$), we associate a Boolean-valued function
$\gamma_H:{\cal A}\times {\cal R}\longrightarrow\{0,1\}$
where
$\gamma_H({\boldsymbol\alpha},{\boldsymbol r}) = 1$
if and only if there exists an {$({\boldsymbol\alpha},{\boldsymbol r})$-activation process for $H$.
Each matrix pair $({\boldsymbol\alpha},{\boldsymbol r})\in {\cal A}\times {\cal R}$
can be described with $O(\lambda k)$ numbers.
Hence, the function $\gamma _H$ can be represented by storing the set of all
triples $\{({\boldsymbol\alpha},{\boldsymbol r},\gamma_H({\boldsymbol\alpha},{\boldsymbol r}))\,:\,({\boldsymbol\alpha},{\boldsymbol r})\in {\cal A}\times {\cal R}\}\,,$
requiring, in total, space
$$O(\lambda k)\cdot |{\cal A}\times {\cal R}| =
O(\lambda k)\cdot {O((n+1)^{(\lambda+1)k})\cdot O((n+1)^{\lambda k})} = {O(\lambda k(n+1)^{(2\lambda+1)k})}.$$

\medskip

Below we will describe how to compute all functions $\gamma_H$ for all subgraphs $H$ corresponding to the nodes of the
tree $T$. Assuming all these functions have been computed, we can extract the solution to the {\sc $(\lambda, \beta, \alpha)$-Target Set Decision} problem on $G$ from the root of $T$ as follows.
\begin{prop}\label{prop2}
There exists a set $S\subseteq V(G)$ such
that $|S|\le \beta$ and $|\Active[S,\lambda]|\ge \alpha$ if and only if
there exists a matrix $\boldsymbol\alpha\in {\cal A}$
with
$\gamma_G(\boldsymbol\alpha, \boldsymbol 0) = 1$
(where $\boldsymbol 0\in {\cal R}$ denotes the all zero matrix) such that
$\sum_{\ell = 1}^k\alpha[0,\ell] \le \beta$
and $\sum_{i = 0}^\lambda\sum_{\ell = 1}^k\alpha[i,\ell] \ge \alpha$.
\end{prop}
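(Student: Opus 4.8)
The plan is to prove the two directions of the equivalence by unwinding Definition~\ref{def1} in the special case where $H = G$ and $\boldsymbol r = \boldsymbol 0$, so that an $(\boldsymbol\alpha, \boldsymbol 0)$-activation process for $G$ is simply an ordinary activation process in $G$ (with the original thresholds) whose per-round, per-label activation counts are prescribed by $\boldsymbol\alpha$.

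\emph{Forward direction.} Suppose $S \subseteq V(G)$ satisfies $|S| \le \beta$ and $|\Active[S,\lambda]| \ge \alpha$. I would set $S[i] := \Active[S,i]$ for $i \in [0,\lambda]$; this is a non-decreasing sequence of vertex subsets of $V(G)$ by definition of the activation process. For $\boldsymbol r = \boldsymbol 0$, condition~(1) of Definition~\ref{def1} reads $(S[i]\setminus S[i-1])\cap V_\ell(G) = \{u \in V_\ell(G)\setminus S[i-1] : |N_G(u)\cap S[i-1]| \ge t(u)\}$, which holds because $\Active[S,i] = \Active[S,i-1] \cup \{u : |N(u)\cap \Active[S,i-1]| \ge t(u)\}$ and the two sides just restrict this identity to label class $\ell$ (using that $S[i-1] \subseteq S[i]$ so the newly activated vertices are exactly those meeting their threshold and not already active). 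Now define $\boldsymbol\alpha$ by $\alpha[0,\ell] := |S \cap V_\ell(G)|$ and $\alpha[i,\ell] := |(S[i]\setminus S[i-1])\cap V_\ell(G)|$ for $i \in [\lambda]$; then conditions~(2) and~(3) hold by construction, and one must check $\boldsymbol\alpha \in {\cal A}$, i.e. each entry lies in $[0,\alpha]$ — this follows since each entry is at most $|V(G)| = n$ and we may assume $\alpha \le n$ (otherwise the instance is trivially infeasible and there is nothing to prove), so in fact I would note that the relevant bound is really $[0,n]$ and $\alpha \le n$; alternatively, truncate $\alpha$ to $n$ at the outset. Hence $\gamma_G(\boldsymbol\alpha,\boldsymbol 0) = 1$. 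Finally, $\sum_\ell \alpha[0,\ell] = |S| \le \beta$, and $\sum_{i=0}^\lambda \sum_\ell \alpha[i,\ell] = |S[\lambda]| = |\Active[S,\lambda]| \ge \alpha$, where the middle equality is because $S[0], S[1]\setminus S[0], \ldots, S[\lambda]\setminus S[\lambda-1]$ partition $S[\lambda]$.

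\emph{Reverse direction.} Conversely, suppose $\boldsymbol\alpha \in {\cal A}$ satisfies $\gamma_G(\boldsymbol\alpha,\boldsymbol 0) = 1$, $\sum_\ell \alpha[0,\ell] \le \beta$, and $\sum_{i=0}^\lambda\sum_\ell \alpha[i,\ell] \ge \alpha$. Then there is an $(\boldsymbol\alpha,\boldsymbol 0)$-activation process $S[0] \subseteq \cdots \subseteq S[\lambda]$ for $G$. Set $S := S[0]$. Condition~(1) with $\boldsymbol r = \boldsymbol 0$, summed over all labels $\ell \in [k]$, gives $S[i]\setminus S[i-1] = \{u \in V(G)\setminus S[i-1] : |N_G(u)\cap S[i-1]| \ge t(u)\}$ for every $i \in [\lambda]$; since the $V_\ell(G)$ partition $V(G)$, this yields $S[i] = S[i-1] \cup \{u : |N(u)\cap S[i-1]| \ge t(u)\}$, which is exactly the recurrence defining $\Active[S,\cdot]$. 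An easy induction on $i$ then shows $S[i] = \Active[S,i]$ for all $i \in [0,\lambda]$ (base case $S[0] = S = \Active[S,0]$). By condition~(2), $|S| = \sum_\ell |S[0]\cap V_\ell(G)| = \sum_\ell \alpha[0,\ell] \le \beta$, and using conditions~(2),~(3) together with the partition of $S[\lambda]$ into rounds, $|\Active[S,\lambda]| = |S[\lambda]| = \sum_{i=0}^\lambda\sum_\ell \alpha[i,\ell] \ge \alpha$, as required.

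I do not expect any serious obstacle here: the proposition is essentially a bookkeeping statement verifying that Definition~\ref{def1} with trivial threshold reductions specializes to the ordinary activation process. The one point that needs a little care is the membership $\boldsymbol\alpha \in {\cal A}$ (entries bounded by $\alpha$ versus by $n$) in the forward direction; I would handle it by the standard observation that one may assume $\alpha \le n$ without loss of generality, or equivalently by noting that the definition of ${\cal A}$ already effectively caps the entries at $n$ since no label class can have more than $n$ activated vertices. The key structural fact used throughout — that the sets $S[0], S[i]\setminus S[i-1]$ form a partition of $S[\lambda]$, and that the label classes $V_\ell(G)$ form a partition of $V(G)$ — makes all the counting identities immediate.
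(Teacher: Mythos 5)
Your proposal is correct and takes essentially the same route as the paper, whose own proof is just a two-sentence assertion that the two sum constraints encode the budget and the activation requirement; you have simply written out the bookkeeping (specializing Definition~\ref{def1} to $H=G$, $\boldsymbol r=\boldsymbol 0$, and using the partitions by rounds and by labels) that the paper leaves implicit. The point you flag about membership in ${\cal A}$ is a genuine technicality in the paper's definition (the entries should be bounded by $n$ rather than by $\alpha$, or one must argue that exceeding $\alpha$ in some entry can be avoided), not a defect of your argument, and your proposed fix is the right one.
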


\begin{proof}{}
The constraint  $\sum_{\ell = 1}^k\alpha[0,\ell] \le \beta$ specifies that the total number of initially targeted vertices is within the budget $\beta$, and the constraint $\sum_{i = 0}^\lambda\sum_{\ell = 1}^k\alpha[i,\ell] \ge \alpha$ specifies that the total number of vertices activated within round $\lambda$
is at least the activation requirement $\alpha$.
\end{proof}

Here we give a detailed description of how to compute the functions $\gamma_H$ by traversing the tree $T$ bottom up.
We consider four cases according to the type of a node $v$ of the clique-width tree $T$.

{\bf Case 1: $v$ is a leaf.}\\
\hspace*{0.4truecm} In this case, the labeled subgraph $H$ of $G$ associated to $v$ is of the form $H = a(u)$ for some vertex $u\in V(G)$ and some label $a\in [k]$.
That is, a new vertex $u$ is introduced with label $a$.

Suppose that $({\boldsymbol\alpha},{\boldsymbol r})\in {\cal A}\times {\cal R}$ is a matrix pair such that
there exists an $({\boldsymbol\alpha},{\boldsymbol r})$-activation process ${\cal S} = (S[0],S[1],\ldots, S[\lambda])$ for $H$.
For every $\ell\in [k]\setminus \{a\}$, we have $V_\ell(H) = \emptyset$ and hence
$\alpha[i,\ell] = 0$ for all $i\in [0,\lambda]$.
Moreover, since $V_a(H) = \{u\}$, we have
$$0\le \sum_{i = 0}^\lambda\alpha[i,a]
= |S[0]\cap V_a(H)|+ \sum_{i = 1}^\lambda|(S[i]\setminus S[i-1])\cap V_a(H)|\le |V_a(H)| = 1\,.$$
Suppose first that $\sum_{i = 0}^\lambda\alpha[i,a] = 0$, that is, $\alpha[i,a] = 0$ for all $i$.
Then, $S[i] = \emptyset$ for all $i\in [0,\lambda]$,
and the defining property $(1)$ of the $({\boldsymbol\alpha},{\boldsymbol r})$-activation process implies
that $r[i,a]<t(u)$ for every $i\in [\lambda]$
(otherwise $u$ would belong to $S[i]$).

Now, suppose that $\sum_{i = 0}^\lambda\alpha[i,a] = 1$. Then, there exists a unique
$i^*\in [0,\lambda]$ such that $$\alpha[i,a] = \left\{
                                                            \begin{array}{ll}
                                                              1, & \hbox{if $i = i^*$;} \\
                                                              0, & \hbox{otherwise.}
                                                            \end{array}
                                                          \right.$$
If $i^* = 0$ then $\{u\} = S[0] \subseteq S[1] \subseteq \ldots \subseteq S[\lambda]\subseteq V(H) = \{u\}$, therefore
 $S[i] = \{u\}$ for all $i\in [0,\lambda]$, independently of $\boldsymbol r$.
If $i^* \ge 1$ then properties $(2)$ and $(3)$ imply that $S[0] = \ldots = S[i^*-1] = \emptyset$ and
$S[i^*] = S[i^*+1] = \ldots = S[\lambda] = \{u\}$. Hence, the
defining property $(1)$ of the $({\boldsymbol\alpha},{\boldsymbol r})$-activation process implies,
on the one hand, that $r[i,a]<t(u)$ for every $i\in \{1,\ldots, i^*-1\}$ (otherwise $u$ would belong to $S[i]$), while,
on the other hand, $r[i^*,a]\ge t(u)$. Hence, $i^* = \min\{i\ge 1\,:\, r[i,a]\ge t(u)\}$.

%
%
%
Hence, if there exists an $({\boldsymbol\alpha},{\boldsymbol r})$-activation process for $H$,
then
$({\boldsymbol\alpha},{\boldsymbol r})\in ({\cal A}\times {\cal R})^*$ where
\begin{eqnarray*}
({\cal A}\times {\cal R})^* &=& \bigg\{({\boldsymbol\alpha},{\boldsymbol r})\in {\cal A}\times {\cal R}\,:\,
(\forall \ell\neq a)(\alpha[i,\ell] = 0)\wedge \bigg(\sum_{i = 0}^\lambda\alpha[i,a] \le 1\bigg)\\
&&\wedge \bigg[\bigg(\sum_{i = 0}^\lambda\alpha[i,a] = 0\bigg) \Rightarrow \Big((\forall i)\big(r[i,a]<t(u)\big)\Big)\bigg]\\
&&\wedge \bigg[\Big((\exists i^*)(\alpha[i^*,a] = 1)\Big) \Rightarrow \Big(i^* = 0 \vee i^* = \min\{i\ge 1\,:\, r[i,a]\ge t(u)\}\Big)\bigg]\bigg\}\,.
\end{eqnarray*}

Conversely, by reversing the above arguments, one can verify that for every $({\boldsymbol\alpha},{\boldsymbol r})\in ({\cal A}\times {\cal R})^*$
there exists an $({\boldsymbol\alpha},{\boldsymbol r})$-activation process for $H$.
Hence, for every $({\boldsymbol\alpha},{\boldsymbol r})\in {\cal A}\times {\cal R}$, we set
$$\gamma_H({\boldsymbol\alpha},{\boldsymbol r}) = \left\{
                                                    \begin{array}{ll}
                                                      1, & \hbox{if $({\boldsymbol\alpha},{\boldsymbol r})\in ({\cal A}\times {\cal R})^*$;} \\
                                                      0, & \hbox{otherwise.}
                                                    \end{array}
                                                  \right.$$

\medskip
{\bf Case 2: $v$ has exactly two children in $T$.}\\
\hspace*{0.4truecm}In this case, the labeled subgraph $H$ of $G$ associated to $v$ is the disjoint union $H = H_1\oplus H_2$, where $H_1$ and $H_2$ are the labeled subgraphs of
$G$ associated to the two children of $v$ in $T$.

Suppose that
$(S[0],\ldots, S[\lambda])$ is an $({\boldsymbol \alpha}, {\boldsymbol r})$-activation process for $H$.
For every round $i\in [0,\lambda]$ and for every label $\ell\in [k]$, set
$$S_1[i] = S[i]\cap V(H_1)\,,$$ and
$${ \alpha}_1[i,\ell] = \left\{
                                    \begin{array}{ll}
                                      |S_1[0]\cap V_\ell(H_1)|, & \hbox{if $i = 0$;} \\
                                      |(S_1[i]\setminus S_1[i-1])\cap V_\ell(H_1)|, & \hbox{otherwise.}
                                    \end{array}
      \right.$$
Then,
$(S_1[0],\ldots, S_1[\lambda])$ is an $({\boldsymbol \alpha}_1, {\boldsymbol r})$-activation process for $H_1$.
Properties $(2)$ and $(3)$ follow immediately from the definition of ${\boldsymbol \alpha}_1$.
Property $(1)$ follows from the fact that in $H$ there are no edges between vertices of $H_1$ and $H_2$.
One can analogously define an $({\boldsymbol \alpha}_2, {\boldsymbol r})$-activation process
for $H_2$.
Since $H$ is the disjoint union of $H_1$ and $H_2$,
these two processes satisfy the matrix equation  ${\boldsymbol \alpha}_1+{\boldsymbol \alpha}_2 = {\boldsymbol \alpha}$.

Conversely,
suppose that there exist an  $({\boldsymbol \alpha}_1, {\boldsymbol r})$-activation process
$(S_1[0],\ldots, S_1[\lambda])$ for $H_1$ and
an  $({\boldsymbol \alpha}_2, {\boldsymbol r})$-activation process
$(S_2[0],\ldots, S_2[\lambda])$ for $H_2$.
Then, defining $S[i] = S_1[i]\cup S_2[i]$ for all rounds $i\in [0,\lambda]$, we obtain
an $({\boldsymbol \alpha}, {\boldsymbol r})$-activation process
$(S[0],\ldots, S[\lambda])$ for $H$,
where
${\boldsymbol \alpha}={\boldsymbol \alpha}_1+{\boldsymbol \alpha}_2$.

Hence, for every
$({\boldsymbol\alpha},{\boldsymbol r})\in {\cal A}\times {\cal R}$ we set
$$\gamma_H({\boldsymbol\alpha},{\boldsymbol r}) =
\left\{
  \begin{array}{ll}
    1, & \hbox{if $(\exists {\boldsymbol \alpha}_1, {\boldsymbol \alpha}_2\in {\cal A})({\boldsymbol \alpha}={\boldsymbol \alpha}_1+{\boldsymbol \alpha}_2$ and
$\gamma_{H_1}({\boldsymbol\alpha}_1,{\boldsymbol r})=\gamma_{H_2}({\boldsymbol\alpha}_2,{\boldsymbol r}) = 1$);} \\
    0, & \hbox{otherwise.}
  \end{array}
\right.$$

{\bf Case 3: $v$ has exactly one child in $T$ and the labeled subgraph $H$ of $G$ associated to $v$ is of the form $H = \eta_{a,b}(H_1)$.}\\
\hspace*{0.4truecm}In this case,  graph $H$ is obtained from $H_1$ by adding all edges between vertices labeled $a$ and vertices labeled $b$.
Since the $k$-expression is irredundant, in $H_1$ there are no edges between vertices labeled $a$ and vertices labeled $b$.

Suppose that
${\cal S} = (S[0],\ldots, S[\lambda])$ is an $({\boldsymbol \alpha}, {\boldsymbol r})$-activation process for $H$.
For every round $i\in [0,\lambda]$ and for every label $\ell\in [k]$, set
\begin{equation*}
{r}_1[i,\ell] = \left\{
                                    \begin{array}{ll}
                                      {\min\{n, r[i,a]+\sum_{j < i}\alpha[j,b]\}}, & \hbox{if $\ell= a$;} \\
                                      {\min\{n, r[i,b]+\sum_{j < i}\alpha[j,a]\}}, & \hbox{if $\ell= b$;} \\
                                      r[i,\ell], & \hbox{otherwise\,,}
                                    \end{array}
      \right.
\end{equation*}
Let us verify that ${\cal S}$ is an $({\boldsymbol \alpha}, {\boldsymbol r}_1)$-activation process for $H_1$:
\begin{itemize}
  \item Defining conditions $(2)$ and $(3)$ are satisfied since the partition of the vertex set $V(H) = V(H_1)$ into label classes
is the same in both graphs $H$ and $H_1$.
  \item To verify condition $(1)$, notice first that for every label $\ell\in [k]\setminus\{a,b\}$ and every vertex $u\in V_\ell(H_1) = V_\ell(H)$, we have $N_{H_1}(u) = N_H(u)$. Moreover, for each round $i\in [\lambda]$, it holds that $r_i[i,\ell] = r[i,\ell]$, which implies
   $\big|N_{H_1}(u)\cap S[i - 1]\big|\ge t(u)-r_1[i,\ell]$ if and only if
   $\big|N_{H}(u)\cap S[i - 1]\big|\ge t(u)-r[i,\ell]$.

Now consider the case $\ell = a$. (The case $\ell = b$ is analogous.)
Since the $k$-expression is irredundant,
   the $H$-neighborhood of every vertex $u\in V_a(H_1) = V_a(H)$ is equal to the disjoint union
   $$N_H(u) = N_{H_1}(u) \cup V_b(H)\,.$$
Consider an arbitrary round $i\in [\lambda]$.
We will show that condition
\begin{equation}\label{cond1}
\big|N_{H_1}(u)\cap S[i - 1]\big|\ge t(u)-r_1[i,a]
\end{equation}
is equivalent to the condition
\begin{equation}\label{cond2}
\big|N_{H}(u)\cap S[i - 1]\big|\ge t(u)-r[i,a]\,.
\end{equation}
The set $S[i - 1]$ can be written as the disjoint union
$$S[i - 1] = S[0]\cup\bigcup_{j = 1}^{i-1}\left(S[j]\setminus S[j-1]\right)\,,$$
hence
\begin{eqnarray*}
\big|S[i - 1]\cap V_{b}(H)\big| &=& \big|S[0]\cap V_{b}(H)\big|+\sum_{j = 1}^{i-1}\big|(S[j]\setminus S[j-1])\cap V_{b}(H)\big|
\\
&=& \alpha[0,b]+\sum_{j = 1}^{i-1}\alpha[j,b] = \sum_{j<i}\alpha[j,b]
\end{eqnarray*}
and consequently
\begin{eqnarray*}
\big|N_{H}(u)\cap S[i - 1]\big| &=& \big|N_{H_1}(u)\cap S[i - 1]\big| + \big|V_{b}(H)\cap S[i - 1]\big|\\
&=& \big|N_{H_1}(u)\cap S[i - 1]\big| + \sum_{j<i}\alpha[j,b]\,.
\end{eqnarray*}


{Suppose first that $t(u)\le r_1[i,a]$. Then,
condition (\ref{cond1}) trivially holds, and condition (\ref{cond2}) holds as well:
\begin{eqnarray*}
\big|N_{H}(u)\cap S[i - 1]&\ge& \big|S[i - 1]\cap V_{b}(H)\big|= \sum_{j<i}\alpha[j,b]\\
&=&(r[i,a]+\sum_{j<i}\alpha[j,b])-r[i,a]\\
&\ge& r_1[i,a]-r[i,a]\ge t(u)-r[i,a]\,.
\end{eqnarray*}
}
{Suppose now that $t(u) > r_1[i,a]$. Then, we have  $r_1[i,a]<n$, which implies that
$r_1[i,a] = r[i,a]+\sum_{j<i}\alpha[j,b]$.}
Therefore, condition (\ref{cond1}),
$$\big|N_{H_1}(u)\cap S[i - 1]\big|\ge t(u)-r_1[i,a]\,,$$
is equivalent to the condition
$$\big|N_{H_1}(u)\cap S[i - 1]\big|\ge t(u)-r[i,a]-\sum_{j<i}\alpha[j,b]$$
which is in turn equivalent to
$$\big|N_{H}(u)\cap S[i - 1]\big|\ge t(u)-\bigg(r[i,a]+\sum_{j<i}\alpha[j,b]\bigg)+ \sum_{j<i}\alpha[j,b]$$
which is the same as condition (\ref{cond2})\,,
$$\big|N_{H}(u)\cap S[i - 1]\big|\ge t(u)-r[i,a]\,.$$

Putting the two cases together, we have
\begin{eqnarray*}
&&(S[i] \setminus  S[i-1])\cap V_a(H_1)= (S[i] \setminus  S[i-1])\cap V_a(H)\\
& =& \Big\{u \in V_a(H)\setminus  S[i-1]\,:\, \big|N_{H}(u)\cap S[i - 1]\big|\ge t(u)-r[i,a]\Big\}\\
& =& \Big\{u \in V_a(H_1)\setminus  S[i-1]\,:\, \big|N_{H_1}(u)\cap S[i - 1]\big|\ge t(u)-r_1[i,a]\Big\}\,,
\end{eqnarray*}
and ${\cal S}$ is indeed an $({\boldsymbol \alpha}, {\boldsymbol r}_1)$-activation process for $H_1$.
\end{itemize}

Conversely, suppose that $({\boldsymbol\alpha},{\boldsymbol r})\in {\cal A}\times {\cal R}$ is such that
${\cal S} = (S[0],\ldots, S[\lambda])$ is an $({\boldsymbol \alpha}, {\boldsymbol r}_1)$-activation process for $H_1$, where
\begin{equation*}
{r}_1[i,\ell] = \left\{
                                    \begin{array}{ll}
                                      {\min\{n, r[i,a]+\sum_{j < i}\alpha[j,b]\}}, & \hbox{if $\ell= a$;} \\
                                      {\min\{n, r[i,b]+\sum_{j < i}\alpha[j,a]\}}, & \hbox{if $\ell= b$;} \\
                                      r[i,\ell], & \hbox{otherwise\,,}
                                    \end{array}
      \right.
\end{equation*}
Reversing the argument above shows that ${\cal S}$ is an $({\boldsymbol \alpha}, {\boldsymbol r})$-activation process for~$H$.

Hence, for every $({\boldsymbol\alpha},{\boldsymbol r})\in {\cal A}\times {\cal R}$ we define
the integer-valued matrix ${\boldsymbol r}_1$
by setting
\begin{equation*}
{r}_1[i,\ell] = \left\{
                                    \begin{array}{ll}
                                      {\min\{n, r[i,a]+\sum_{j < i}\alpha[j,b]\}}, & \hbox{if $\ell= a$;} \\
                                      {\min\{n, r[i,b]+\sum_{j < i}\alpha[j,a]\}}, & \hbox{if $\ell= b$;} \\
                                      r[i,\ell], & \hbox{otherwise\,,}
                                    \end{array}
      \right.
\end{equation*}
for every round $i\in [0,\lambda]$ and for every label $\ell\in [k]$.
Then, we set, for all $({\boldsymbol\alpha},{\boldsymbol r})\in {\cal A}\times {\cal R}$,
$$\gamma_H({\boldsymbol\alpha},{\boldsymbol r}) = \gamma_{H_1}({\boldsymbol\alpha},{\boldsymbol r}_1)\,.$$

{\bf Case 4: $v$ has exactly one child in $T$ and
the labeled subgraph $H$ of $G$ associated to $v$ is of the form $H = \rho_{a\to b}(H_1)$.}

Suppose that $({\boldsymbol \alpha}, {\boldsymbol r})\in {\cal A}\times {\cal R}$ is such that
there exists an $({\boldsymbol \alpha}, {\boldsymbol r})$-activation process ${\cal S} = (S[0],\ldots, S[\lambda])$ for $H$.
For every round $i\in [0,\lambda]$ and for every label $\ell\in [k]$, set
$${ \alpha}_1[i,\ell] = \left\{
                                    \begin{array}{ll}
                                      |S[0]\cap V_\ell(H_1)|, & \hbox{if $i = 0$;} \\
                                      |(S[i]\setminus S[i-1])\cap V_\ell(H_1)|, & \hbox{otherwise.}
                                    \end{array}
      \right.$$
and, for every round $i\in [\lambda]$ and for every label $\ell\in [k]$, set
$${r}_1[i,\ell] = \left\{
                                    \begin{array}{ll}
                                      r[i,b], & \hbox{if $\ell = a$;} \\
                                      r[i,\ell], & \hbox{otherwise.}
                                    \end{array}
      \right.$$
Then, ${\cal S}$ is an $({\boldsymbol \alpha}_1, {\boldsymbol r}_1)$-activation process for $H_1$:
Properties $(2)$ and $(3)$ follow immediately from the definition of ${\boldsymbol \alpha}_1$.
To verify property $(1)$, let $i \in [\lambda]$ and $\ell\in [k]$.
If $\ell \not\in \{a,b\}$ then $V_\ell(H_1) = V_\ell(H)$ and $r_1[i,\ell] = r[i,\ell]$, hence the condition in
property $(3)$ holds in this case.
If $\ell \in \{a,b\}$ then, since $V_\ell(H_1)\subseteq V_b(H)$, we have
\begin{eqnarray*}
   && \big(S[i] \setminus  S[i-1]\big)\cap V_\ell(H_1)\\
   &=& \Big(\big(S[i] \setminus  S[i-1]\big)\cap V_b(H)\Big)\cap V_\ell(H_1) \\
   &=&  \Big\{u \in V_b(H)\setminus  S[i-1]\,:\, \big|N_{H}(u)\cap S[i - 1]\big|\ge t(u)-r[i,b]\Big\}\cap V_\ell(H_1)\\
   &=&  \Big\{u \in V_\ell(H_1)\setminus  S[i-1]\,:\, \big|N_{H_1}(u)\cap S[i - 1]\big|\ge t(u)-r_1[i,\ell]\Big\}\,,
\end{eqnarray*}
so again the condition holds.
Notice that the matrices $\boldsymbol \alpha$ and $\boldsymbol \alpha_1$ are related as follows:
For every round $i\in [0,\lambda]$ and for every label $\ell\in [k]$, we have
$${\alpha}[i,\ell] = \left\{
                        \begin{array}{ll}
                          0, & \hbox{if $\ell = a$;} \\
                          {\alpha}_1[i,a]+{\alpha}_1[i,b], & \hbox{if $\ell = b$;}\\
                          {\alpha}_1[i,\ell], & \hbox{otherwise}.
                          \end{array}
                      \right.$$

Conversely, suppose that $({\boldsymbol \alpha}, {\boldsymbol r})\in {\cal A}\times {\cal R}$ is such that
there exists an  $({\boldsymbol \alpha}_1, {\boldsymbol r}_1)$-activation process
${\cal S} = (S[0],\ldots, S[\lambda])$ for $H_1$, where
for every round $i\in [\lambda]$ and for every label $\ell\in [k]$, we have
$${r}_1[i,\ell] = \left\{
                                    \begin{array}{ll}
                                      r[i,b], & \hbox{if $\ell = a$;} \\
                                      r[i,\ell], & \hbox{otherwise.}
                                    \end{array}
      \right.$$
and
for every $i\in [0,\lambda]$ and for every label $\ell\in [k]$, we have
$${\alpha}[i,\ell] = \left\{
                        \begin{array}{ll}
                          0, & \hbox{if $\ell = a$;} \\
                          {\alpha}_1[i,a]+{\alpha}_1[i,b], & \hbox{if $\ell = b$;}\\
                          {\alpha}_1[i,\ell], & \hbox{otherwise}.
                          \end{array}
                      \right.$$
Then, it can be verified that ${\cal S}$ is an $({\boldsymbol \alpha}, {\boldsymbol r})$-activation process for $H$.

Hence, for every
$({\boldsymbol\alpha},{\boldsymbol r})\in {\cal A}\times {\cal R}$ we set
$\gamma_H({\boldsymbol\alpha},{\boldsymbol r})=1$ if and only if
there exists $({\boldsymbol\alpha}_1,{\boldsymbol r}_1)\in {\cal A}\times {\cal R}$ such that
$\gamma_{H_1}({\boldsymbol\alpha}_1,{\boldsymbol r}_1) = 1$, where
for every round $i\in [\lambda]$ and for every label $\ell\in [k]$, we have
$${r}_1[i,\ell] = \left\{
                                    \begin{array}{ll}
                                      r[i,b], & \hbox{if $\ell = a$;} \\
                                      r[i,\ell], & \hbox{otherwise.}
                                    \end{array}
      \right.$$
and
for every $i\in [0,\lambda]$ and for every label $\ell\in [k]$, we have
$${\alpha}[i,\ell] = \left\{
                        \begin{array}{ll}
                          0, & \hbox{if $\ell = a$;} \\
                          {\alpha}_1[i,a]+{\alpha}_1[i,b], & \hbox{if $\ell = b$;}\\
                          {\alpha}_1[i,\ell], & \hbox{otherwise}.
                          \end{array}
                      \right.$$
This completes the description of the four cases and with it the description of the algorithm.

\remove{
Let us now describe how to compute the functions $\gamma_H$ by traversing the tree $T$ bottom up.
We consider four cases according to the type of a node $v$ of $T$.

\noindent{\bf Case 1: $v$ is a leaf.}
In this case, the labeled subgraph $H$ of $G$ associated to $v$ is of the form $H = a(u)$
for some vertex $u\in V(G)$ and some label $a\in [k]$.
That is, a new vertex $u$ is introduced with label $a$. Let us denote by $({\cal A}\times {\cal R})^*$ the set
\begin{eqnarray*}
({\cal A}\times {\cal R})^* &=& \bigg\{({\boldsymbol\alpha},{\boldsymbol r})\in {\cal A}\times {\cal R}\,:\,
(\forall \ell\neq a)(\alpha[i,\ell] = 0)\wedge \bigg(\sum_{i = 0}^\lambda\alpha[i,a] \le 1\bigg)\\
&\wedge & \bigg[\bigg(\sum_{i = 0}^\lambda\alpha[i,a] = 0\bigg) \Rightarrow \Big((\forall i)\big(r[i,a]<t(u)\big)\Big)\bigg]\\
&\wedge & \bigg[\Big((\exists i^*)(\alpha[i^*,a] = 1)\Big) \Rightarrow \Big(i^* = 0 \vee i^* = \min\{i\ge 1\,:\, r[i,a]\ge t(u)\}\Big)\bigg]\bigg\}\,.
\end{eqnarray*}
In this case for every $({\boldsymbol\alpha},{\boldsymbol r})\in {\cal A}\times {\cal R}$, we set
$\gamma_H({\boldsymbol\alpha},{\boldsymbol r}) = \left\{
                                                    \begin{array}{ll}
                                                      1, & \hbox{if $({\boldsymbol\alpha},{\boldsymbol r})\in ({\cal A}\times {\cal R})^*$;} \\
                                                      0, & \hbox{otherwise.}
                                                    \end{array}
                                                  \right.$
\noindent {\bf Case 2: $v$ has exactly two children in $T$.}
In this case, the labeled subgraph $H$ of $G$ associated to $v$ is the disjoint union $H = H_1\oplus H_2$, where $H_1$ and $H_2$ are the labeled subgraphs of
$G$ associated to the two children of $v$ in $T$. In this case, for every
$({\boldsymbol\alpha},{\boldsymbol r})\in {\cal A}\times {\cal R}$ we set
$$\gamma_H({\boldsymbol\alpha},{\boldsymbol r}) =
\left\{
  \begin{array}{ll}
    1, & \hbox{if $(\exists {\boldsymbol \alpha}_1, {\boldsymbol \alpha}_2\in {\cal A})({\boldsymbol \alpha}={\boldsymbol \alpha}_1+{\boldsymbol \alpha}_2$ and
$\gamma_{H_1}({\boldsymbol\alpha}_1,{\boldsymbol r})=\gamma_{H_2}({\boldsymbol\alpha}_2,{\boldsymbol r}) = 1$);} \\
    0, & \hbox{otherwise.}
  \end{array}
\right.$$
{\bf Case 3: $v$ has exactly one child in $T$ and the labeled subgraph $H$ of $G$ associated to $v$ is of the form $H = \eta_{a,b}(H_1)$.}
In this case,  graph $H$ is obtained from $H_1$ by adding all edges between vertices labeled $a$ and vertices labeled $b$.
Since the $k$-expression is irredundant, in $H_1$ there are no edges between vertices labeled $a$ and vertices labeled $b$.

For every $({\boldsymbol\alpha},{\boldsymbol r})\in {\cal A}\times {\cal R}$ we define
the integer-valued matrix ${\boldsymbol r}_1$
by setting
$${r}_1[i,\ell] = \left\{
                                    \begin{array}{ll}
                                      {\min\{n, r[i,a]+\sum_{j < i}\alpha[j,b]\}}, & \hbox{if $\ell= a$;} \\
                                      {\min\{n, r[i,b]+\sum_{j < i}\alpha[j,a]\}}, & \hbox{if $\ell= b$;} \\
                                      r[i,\ell], & \hbox{otherwise\,,}
                                    \end{array}
      \right.$$
for every round $i\in [0,\lambda]$ and for every label $\ell\in [k]$.
Then, we set,
$\gamma_H({\boldsymbol\alpha},{\boldsymbol r}) = \gamma_{H_1}({\boldsymbol\alpha},{\boldsymbol r}_1).$
\\
{\bf Case 4: $v$ has exactly one child in $T$ and
the labeled subgraph $H$ of $G$ associated to $v$ is of the form $H = \rho_{a\to b}(H_1)$.}
For every
$({\boldsymbol\alpha},{\boldsymbol r})\in {\cal A}\times {\cal R}$ we set
$\gamma_H({\boldsymbol\alpha},{\boldsymbol r})=1$ if and only if
there exists $({\boldsymbol\alpha}_1,{\boldsymbol r}_1)\in {\cal A}\times {\cal R}$ such that
$\gamma_{H_1}({\boldsymbol\alpha}_1,{\boldsymbol r}_1) = 1$, where
for every round
$i$ and for every label $\ell\in [k]$, we have
$${r}_1[i,\ell] = \left\{\begin{array}{ll}
             r[i,b], & \hbox{if $\ell = a$;} \\
                                      r[i,\ell], & \hbox{otherwise.}
                                    \end{array}
      \right.
        \quad \mbox{ and } \quad
{\alpha}[i,\ell] = \left\{\begin{array}{ll}
                          0, & \hbox{if $\ell = a$;} \\
                          {\alpha}_1[i,a]+{\alpha}_1[i,b], & \hbox{if $\ell = b$;}\\
                          {\alpha}_1[i,\ell], & \hbox{otherwise}.
                          \end{array}\right.$$
This completes the description of the four cases and  the description of the algorithm.
}

\noindent {\bf Correctness and time complexity.} Correctness of the algorithm follows from the derivation of the recursive formulas.
We now analyze the algorithm's time complexity. Given an irredundant $k$-expression $\sigma$ of $G$, the clique-width tree $T$ can be computed from $\sigma$ in linear time.
The algorithm computes the sets ${\cal A}$ and ${\cal R}$ in time
$|{\cal A}| = {O((n+1)^{(\lambda+1)k})}$
and
$|{\cal R}| = {O((n+1)^{\lambda k})}\,,$
respectively.

The algorithm then traverses the clique-width tree bottom-up.
At each leaf of $T$ and for each $({\boldsymbol \alpha}, {\boldsymbol r})\in {\cal A}\times {\cal R}$,
it can be verified in time $O(\lambda k)$ whether $({\boldsymbol \alpha}, {\boldsymbol r})\in ({\cal A}\times {\cal R})^*$.
Hence, the function $\gamma_H$ at each leaf can be computed in time
${O(\lambda k(n+1)^{(2\lambda +1)k})}$.

\begin{sloppypar}
At an internal node corresponding to Case 2, the value of
$\gamma_H({\boldsymbol \alpha}, {\boldsymbol r})$ for a given $({\boldsymbol \alpha}, {\boldsymbol r})\in {\cal A}\times {\cal R}$
can be computed in time $O(|{\cal A}|\lambda k)$ by iterating over all ${\boldsymbol \alpha}_1\in {\cal A}$,
verifying whether ${\boldsymbol \alpha}_2 := {\boldsymbol \alpha}-{\boldsymbol \alpha}_1\in {\cal A}$
and looking up the values of $\gamma_{H_1}({\boldsymbol \alpha}_1, {\boldsymbol r})$ and $\gamma_{H_2}({\boldsymbol \alpha}_2, {\boldsymbol r})$.
Hence, the total time spent at an internal node corresponding to Case 2 is
$$O(|{\cal A}|\lambda k)\cdot {O((n+1)^{(2\lambda +1)k}) = O(\lambda k(n+1)^{(3\lambda +2)k})}.$$
\end{sloppypar}

At an internal node corresponding to Case 3 or Case 4, the value of
$\gamma_H({\boldsymbol \alpha}, {\boldsymbol r})$ for a given $({\boldsymbol \alpha}, {\boldsymbol r})\in {\cal A}\times {\cal R}$
can be computed in time $O(\lambda k)$. Hence, the total time spent at any such node is
${O(\lambda k(n+1)^{(2\lambda +1)k})}$.

The overall time complexity is $O(\lambda k|\sigma|{(n+1)^{(3\lambda +2)k}})$. For fixed $k$ and $\lambda$, this is polynomial
in the size of the input.

Given the above algorithm for the {\sc $(\lambda, \beta, \alpha)$-Target Set Decision} problem on graphs of bounded clique-width,
finding a set $S$ that solves the
{\sc $(\lambda, \beta, \alpha)$-Target Set Selection} problem
can be done by standard backtracking techniques.
We only need to extend the above algorithm so that at every node node of the clique-width tree $T$
(and the corresponding $k$-labeled subgraph $H$ of $G$) and every
$({\boldsymbol\alpha},{\boldsymbol r})\in {\cal A}\times {\cal R}$
such that $\gamma_H({\boldsymbol\alpha},{\boldsymbol r}) = 1$,
the algorithm also keeps track of an $({\boldsymbol\alpha},{\boldsymbol r})$-activation process for $H$.
As shown in the above analysis of Cases 1--4, this can be computed in polynomial time
using the recursively computed $({\boldsymbol\alpha},{\boldsymbol r})$-activation processes.
Hence, we have the following theorem.

\begin{theorem}\label{thm1}
For every fixed $k$ and $\lambda$, the
{\sc $(\lambda, \beta, \alpha)$-Target Set Selection} problem
can be solved in polynomial time on graphs of clique-width at most $k$.
\end{theorem}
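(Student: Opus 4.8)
The plan is to reduce the optimization problem to its decision version $(\lambda,\beta,\alpha)$-TSD and solve the latter by a dynamic program that traverses the clique-width tree $T(\sigma)$ bottom-up. The guiding observation is that, since we only need to know whether \emph{some} target set respects the budget and meets the activation requirement, it suffices to record, for each labeled subgraph $H$ attached to a node of $T(\sigma)$, the set of achievable \emph{profiles} of an activation process: how many vertices of each label $\ell\in[k]$ belong to the seed set, and how many become active at each round $i\in[\lambda]$. This is exactly what the Boolean function $\gamma_H$ on $\mathcal{A}\times\mathcal{R}$ from Definition~\ref{def1} captures, where $\boldsymbol\alpha$ is the count matrix and the auxiliary matrix $\boldsymbol r$ stores per-label, per-round threshold reductions.

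The reason the reductions $\boldsymbol r$ must be carried along is the join operation, and this is where the real work lies. When $H=\eta_{a,b}(H_1)$, every label-$a$ vertex acquires as new neighbors precisely the label-$b$ vertices, and the number of those active by the end of round $i-1$ is $\sum_{j<i}\alpha[j,b]$ — a quantity already stored in the profile. Hence the activation of label-$a$ vertices inside $H$ is faithfully reproduced inside $H_1$ by lowering their thresholds by $\sum_{j<i}\alpha[j,b]$ at round $i$ (symmetrically for $b$), capped at $n$ so the bookkeeping stays inside $\mathcal{R}$, with any vertex whose reduced threshold is already non-positive activated unconditionally. I would prove this equivalence in both directions, taking care of the corner case where the reduction saturates at $n$. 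The remaining three node types are comparatively direct: a leaf $a(u)$ has at most one activatable vertex, whose round of activation is determined by the first $i$ with $r[i,a]\ge t(u)$; a disjoint union $H_1\oplus H_2$ splits a profile additively, $\boldsymbol\alpha=\boldsymbol\alpha_1+\boldsymbol\alpha_2$, since there are no cross edges; and a relabelling $\rho_{a\to b}$ simply merges the counts of the two label classes and copies the reduction of $b$ onto $a$. Each displayed recursion is then justified by exhibiting the correspondence between $(\boldsymbol\alpha,\boldsymbol r)$-activation processes for $H$ and for its child or children.

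Once all $\gamma_H$ are computed, Proposition~\ref{prop2} reads the answer off the root: an admissible target set exists iff some $\boldsymbol\alpha\in\mathcal{A}$ satisfies $\gamma_G(\boldsymbol\alpha,\boldsymbol 0)=1$, $\sum_\ell\alpha[0,\ell]\le\beta$, and $\sum_{i,\ell}\alpha[i,\ell]\ge\alpha$. For the running time, $|\mathcal{A}|=(\alpha+1)^{(\lambda+1)k}=O((n+1)^{(\lambda+1)k})$ and $|\mathcal{R}|=(n+1)^{\lambda k}$; every table entry is computed in time $O(\lambda k)$ except at union nodes, where ranging over the $|\mathcal{A}|$ ways to split $\boldsymbol\alpha$ contributes an extra factor, giving the overall bound $O(\lambda k|\sigma|(n+1)^{(3\lambda+2)k})$, which is polynomial for fixed $k$ and $\lambda$. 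Finally, to output an actual target set rather than a yes/no answer, I would augment the table so that each $1$-entry of $\gamma_H$ also stores a witness $(\boldsymbol\alpha,\boldsymbol r)$-activation process, and then backtrack from the root through the recursions of Cases~1–4, which are all explicit, assembling $S=S[0]$ at the root within the same time bound. The principal obstacle throughout is the join case: pinning down the reduced-threshold invariant — including its saturation at $n$ — precisely enough that the simulation inside $H_1$ matches the activation process inside $H$ round by round.
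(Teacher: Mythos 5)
Your proposal is correct and follows essentially the same route as the paper: the same dynamic program over the clique-width tree with profiles $(\boldsymbol\alpha,\boldsymbol r)$, the same reduced-threshold treatment of the join operation $\eta_{a,b}$ (including the saturation at $n$), the same read-off at the root via Proposition~\ref{prop2}, the same complexity analysis, and the same backtracking step to recover a witness target set. No substantive differences to report.
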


\noindent {When $\lambda = 1$ and $\alpha = |V(G)|$, the
{\sc $(\lambda, \beta, \alpha)$-Target Set Selection} problem coincides with the {\sc Vector Domination} problem
(see, e.g,~\cite{CMV12+}). Hence, Theorem~\ref{thm1} answers a question
from~\cite{CMV12+} regarding the complexity status of {\sc Vector Domination}
for graphs of bounded treewidth or bounded clique-width.}

\medskip
\noindent
{\bf The {\sc $(\lambda, \beta, A)$-TSS} problem on graphs of small clique-width. }
The approach to solve
the {\sc $(\lambda, \beta, A)$-Target Set Selection} problem on graphs of bounded clique-width
is similar to the one above.
First, we consider the decision problem} naturally associated with the
{\sc $(\lambda, \beta, A)$-TSS} problem, the
{\sc $(\lambda, \beta, A)$-Target Set Decision} problem ({\sc $(\lambda, \beta, A)$-TDS} for short).
Consider an instance $(G,t,\lambda, \beta, A)$ to the {\sc $(\lambda, \beta, A)$-TSD} problem, where $G=(V,E)$
is a graph of clique-width at most $k$ given by an irredundant $k$-expression~$\sigma$.
First, we construct a $2k$-expression $\sigma'$ in such a way that every labeled vertex $a(u)$ with $u\in A$ changes to $(a+k)(u)$.
Moreover, every operation of the form $\eta_{i,j}$ is replaced with a sequence of four composed operations
$\eta_{i,j}\circ \eta_{i,j+k}\circ \eta_{i+k,j}\circ \eta_{i+k,j+k}$, and every operation of the form
$\rho_{i\to j}$ is replaced with a sequence of two composed operations
$\rho_{i,j}\circ \rho_{i+k,j+k}$.
The so defined expression $\sigma'$ can be obtained from $\sigma$ in linear time, and
defines a labeled graph isomorphic to $G$ such that the set $A$ contains precisely the vertices with labels
strictly greater than $k$.
Using the same notation as above (with respect to $\sigma'$), we obtain the following
%
\begin{prop}
There exists a set $S\subseteq V(G)$ such that $|S|\le \beta$ and $|\Active[S,\lambda]|\supseteq A$ if and only if
there exists a matrix $\boldsymbol\alpha\in {\cal A}$ with
$\gamma_G(\boldsymbol\alpha, \boldsymbol 0) = 1$ such that
$\sum_{\ell = 1}^k\alpha[0,\ell] \le \beta$
and $\sum_{i = 0}^\lambda\sum_{\ell = k+1}^{2k}\alpha[i,\ell] = |A|$.
\end{prop}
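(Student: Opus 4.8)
The plan is to argue exactly as in the proof of Proposition~\ref{prop2}, using the relabeling built into $\sigma'$ to keep track of the set $A$. The crucial elementary observation is that when $\boldsymbol r=\boldsymbol 0$ an $(\boldsymbol\alpha,\boldsymbol 0)$-activation process degenerates to an ordinary activation process: substituting $r[i,\ell]=0$ into defining condition~$(1)$ of Definition~\ref{def1} and taking the union over all labels $\ell\in[2k]$ gives $S[i]\setminus S[i-1]=\{u\in V(G)\setminus S[i-1]:|N_G(u)\cap S[i-1]|\ge t(u)\}$ for every $i\in[\lambda]$, whence $S[i]=\Active[S[0],i]$ by induction. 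Thus the $(\boldsymbol\alpha,\boldsymbol 0)$-activation processes for $G$ are precisely the sequences $(\Active[S',0],\dots,\Active[S',\lambda])$ with $S'\subseteq V(G)$, and for such a process conditions~$(2)$--$(3)$ say that $\alpha[i,\ell]$ is the number of vertices of $\sigma'$-label $\ell$ activated at round $i$. Since $\sigma'$ defines a graph isomorphic to $G$ in which $A$ is exactly the union of the label classes $k+1,\dots,2k$, this yields $\sum_{\ell=1}^{2k}\alpha[0,\ell]=|S'|$ and $\sum_{i=0}^{\lambda}\sum_{\ell=k+1}^{2k}\alpha[i,\ell]=|A\cap\Active[S',\lambda]|\le|A|$, with equality in the last inequality if and only if $A\subseteq\Active[S',\lambda]$.

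Both implications then follow mechanically. For the forward direction, given $S$ with $|S|\le\beta$ and $A\subseteq\Active[S,\lambda]$, put $S[i]:=\Active[S,i]$ and let $\boldsymbol\alpha$ record the per-label activation counts (so $\boldsymbol\alpha\in{\cal A}$, each entry being at most $n$); by the observation above this is an $(\boldsymbol\alpha,\boldsymbol 0)$-activation process for $G$, so $\gamma_G(\boldsymbol\alpha,\boldsymbol 0)=1$ by definition of $\gamma_G$, and the two identities above turn $|S|\le\beta$ and $A\subseteq\Active[S,\lambda]$ into the stated constraints on $\boldsymbol\alpha$. Conversely, if $\boldsymbol\alpha\in{\cal A}$ satisfies $\gamma_G(\boldsymbol\alpha,\boldsymbol 0)=1$ together with the two constraints, then by definition of $\gamma_G$ there is an $(\boldsymbol\alpha,\boldsymbol 0)$-activation process $(S[0],\dots,S[\lambda])$ for $G$; setting $S:=S[0]$ we get $S[\lambda]=\Active[S,\lambda]$ and $|S|=\sum_{\ell=1}^{2k}\alpha[0,\ell]\le\beta$, while the equality $\sum_{i=0}^{\lambda}\sum_{\ell=k+1}^{2k}\alpha[i,\ell]=|A|$ combined with the general inequality forces $A\subseteq\Active[S,\lambda]$.

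The only genuinely non-routine step — the main obstacle — is checking the encoding itself: that $\sigma'$ is a well-formed, irredundant $2k$-expression defining a graph isomorphic to $G$ with the claimed label partition. The key point is that replacing $\eta_{i,j}$ by $\eta_{i,j}\circ\eta_{i,j+k}\circ\eta_{i+k,j}\circ\eta_{i+k,j+k}$ adds exactly the edges that $\eta_{i,j}$ added in $\sigma$, merely split into the four parts according to $A$-membership of the endpoints, and similarly $\rho_{i\to j}\circ\rho_{i+k\to j+k}$ performs the same merge of label classes, while no vertex of $A$ ever leaves the range of labels $k+1,\dots,2k$; irredundancy is preserved because each of the four component $\eta$'s still adds at least the edges it adds in $\sigma$ on the corresponding part. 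Hence $\sigma'$ leaves the graph, the thresholds, and the whole activation dynamics untouched, and the correspondence of the first paragraph applies verbatim. Two bookkeeping remarks complete the picture: a target vertex lying in $A$ already carries a shifted label at round $0$, so the budget is accounted for by the full sum $\sum_{\ell=1}^{2k}\alpha[0,\ell]$; and, exactly as in the $(\lambda,\beta,\alpha)$ case, once $\gamma_G(\boldsymbol\alpha,\boldsymbol 0)=1$ is certified at the root, an actual target set $S$ is recovered by the same backtracking extension of the dynamic program.
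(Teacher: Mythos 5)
Your proof is correct and follows the same route the paper intends: the paper gives no explicit argument for this proposition at all, relying on the same two-line counting remark used for Proposition~\ref{prop2}, and your write-up simply supplies the omitted details (that an $({\boldsymbol\alpha},{\boldsymbol 0})$-activation process degenerates to an ordinary activation process, and that $\sigma'$ is a well-formed irredundant $2k$-expression defining $G$ with $A$ equal to the union of the label classes $k+1,\ldots,2k$). One discrepancy is worth flagging: you account for the budget via the full sum $\sum_{\ell=1}^{2k}\alpha[0,\ell]\le\beta$, whereas the proposition as printed only constrains $\sum_{\ell=1}^{k}\alpha[0,\ell]\le\beta$; your version is the right one, since initially targeted vertices belonging to $A$ carry labels in $\{k+1,\ldots,2k\}$ and would otherwise escape the budget (with the literal constraint, the ``if'' direction fails, e.g.\ by a matrix describing the process that targets all of $A$ when $|A|>\beta$). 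So what you have proved is the evidently intended, corrected statement rather than the one stated verbatim.
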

Hence, the same approach as above can be used to solve first the
{\sc $(\lambda, \beta, A)$-Target Set Decision} problem,  and then  the
{\sc $(\lambda, \beta, A)$-Target Set Selection} problem itself.
\begin{theorem}\label{thm2}
For every fixed $k$ and $\lambda$, the {\sc $(\lambda, \beta, A)$-Target Set Selection} problem
can be solved in polynomial time on graphs of clique-width at most $k$.
\end{theorem}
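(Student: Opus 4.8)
The plan is to reduce the $(\lambda,\beta,A)$-TSS problem to the $(\lambda,\beta,\alpha)$-TSS problem already handled by Theorem~\ref{thm1}, by encoding membership in $A$ into the vertex labels. Given an instance $(G,t,\lambda,\beta,A)$ with $G$ of clique-width at most $k$ and an irredundant $k$-expression $\sigma$, I would first carry out the construction of the $2k$-expression $\sigma'$ described just above Proposition: each leaf $a(u)$ with $u\in A$ becomes $(a+k)(u)$; each operation $\eta_{i,j}$ is replaced by the composition $\eta_{i+k,j+k}\circ\eta_{i+k,j}\circ\eta_{i,j+k}\circ\eta_{i,j}$; and each operation $\rho_{i\to j}$ is replaced by $\rho_{i\to j}\circ\rho_{i+k\to j+k}$. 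The first thing to verify is that $\sigma'$ defines a graph isomorphic to $G$ in which $A$ is exactly the set of vertices carrying a label in $\{k+1,\ldots,2k\}$: one checks by induction on the structure of $\sigma'$ that at every subexpression the vertex set is partitioned so that a vertex of original label $a$ has label $a$ if it lies outside $A$ and label $a+k$ if it lies in $A$, and that the four-fold $\eta$ replacement adds precisely the edges that $\eta_{i,j}$ would add in $\sigma$ (the four operations act on pairwise distinct label pairs, hence commute, and their added edge sets are disjoint and union to the desired set). The second point to check is that $\sigma'$ is again \emph{irredundant}: before the block of four $\eta$ operations there are no edges between vertices with a label in $\{i,i+k\}$ and vertices with a label in $\{j,j+k\}$ (mirroring the absence of $i$--$j$ edges before $\eta_{i,j}$ in $\sigma$), and then each of the four operations in turn is applied while no edge of its specific type is yet present. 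Both verifications are routine but must be stated explicitly, since the correctness of the Case~3 recursion in the algorithm of Theorem~\ref{thm1} relies on irredundancy; and the construction is clearly performed in linear time.

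Having produced $\sigma'$, I would run the dynamic programming algorithm of Theorem~\ref{thm1} verbatim on $\sigma'$ with $2k$ labels in place of $k$, computing the functions $\gamma_H$ for every node of the clique-width tree $T(\sigma')$; substituting $2k$ for $k$ in the running-time bound of Theorem~\ref{thm1} gives $O(\lambda k|\sigma'|(n+1)^{(6\lambda+4)k})$, which for fixed $k$ and $\lambda$ is polynomial in the input size. To solve the associated decision problem I would then invoke Proposition above: there is a set $S$ with $|S|\le\beta$ and $A\subseteq\Active[S,\lambda]$ if and only if some $\boldsymbol\alpha\in{\cal A}$ satisfies $\gamma_G(\boldsymbol\alpha,\boldsymbol 0)=1$, the budget condition $\sum_{\ell=1}^{2k}\alpha[0,\ell]\le\beta$ (summed over all $2k$ labels, since vertices of $A$ may legitimately be placed directly in the target set), and $\sum_{i=0}^{\lambda}\sum_{\ell=k+1}^{2k}\alpha[i,\ell]=|A|$, the last equality expressing that all $|A|$ vertices of a label exceeding $k$ — that is, all of $A$ — are activated within round $\lambda$. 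Since $|{\cal A}|=O((n+1)^{(\lambda+1)2k})$ and each candidate $\boldsymbol\alpha$ is tested in time $O(\lambda k)$, this decision step is also polynomial.

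Finally, to recover an actual solution $S$ for the $(\lambda,\beta,A)$-TSS problem rather than a yes/no answer, I would augment the algorithm exactly as described for Theorem~\ref{thm1}: at each node and each $(\boldsymbol\alpha,\boldsymbol r)$ with $\gamma_H(\boldsymbol\alpha,\boldsymbol r)=1$ store (pointers to) an $(\boldsymbol\alpha,\boldsymbol r)$-activation process for $H$, assembled from the recursively stored processes of the children via the Case~1--4 constructions, and then backtrack from a witnessing $\boldsymbol\alpha$ at the root to read off $S=S[0]$; this contributes only a polynomial factor. The only genuine obstacle lies in the first paragraph — making the isomorphism-and-irredundancy verification for $\sigma'$ precise — because everything downstream is a black-box application of Theorem~\ref{thm1} and Proposition, with no new algorithmic content.
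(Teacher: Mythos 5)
Your proposal follows essentially the same route as the paper: construct the $2k$-expression $\sigma'$ encoding membership in $A$ by shifting labels of $A$-vertices by $k$ (with the four-fold $\eta$ and two-fold $\rho$ replacements), run the dynamic program of Theorem~\ref{thm1} on $\sigma'$, decide via the analogous proposition requiring $\gamma_G(\boldsymbol\alpha,\boldsymbol 0)=1$ and $\sum_{i=0}^{\lambda}\sum_{\ell=k+1}^{2k}\alpha[i,\ell]=|A|$, and recover $S$ by backtracking. Your writing of the budget constraint as a sum over all $2k$ labels (rather than only the first $k$, as printed in the paper) is in fact the correct reading, since vertices of $A$ may themselves be placed in the target set.
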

\begin{remk}
The dependency on $\lambda$ and $k$ in Theorems~\ref{thm1} and~\ref{thm2} is exponential.
Since the Vector Dominating Set problem (a special case of {\sc $(\lambda, \beta, \alpha)$-Target Set Selection} problem)
is W[1]-hard with respect to the parameter treewidth~\cite{BBNU11},
the exponential dependency on $k$ is most likely unavoidable.
We leave open the question whether the {\sc $(\lambda, \beta, \alpha)$-} and {\sc $(\lambda, \beta, A)$-Target Set Selection} problems are FPT (or even polynomial) with respect to parameter $\lambda$ for graphs of bounded treewidth or clique-width.
\end{remk}

	
\section{$(\lambda,A)-$TSS on Trees}\label{sec:trees} 	
Since trees are graphs of clique-width at most~$3$, results of Section~\ref{sec:cwd} imply that
the {\sc $(\lambda, \beta, \alpha)$-} and
{\sc $(\lambda, \beta, A)$-TSS} problems
are solvable in polynomial time on trees when  $\lambda$ is constant.
In this section we improve on this latter result by giving a \emph{linear time} algorithm
for the {\sc $(\lambda,A)-$TSS problem}, for \emph{arbitrary} values of $\lambda$. {Our result also extends the}
linear time solution for the classical TSS problem (i.e., without the latency bound) on trees proposed in \cite{Chen-09}.
Like the solution in \cite{Chen-09},
we will assume that the tree is rooted at some node $r$. Then,
once such rooting is fixed, for any node $v$ we will denote by $T(v)$ the  subtree rooted at $v,$
by $C(v)$ the set of children of $v$ and, for $v \neq r,$ by $p(v)$ the parent of $v.$

In the following we assume that $\forall v \in V, 1 \leq t(v) \leq d(v)$.
The more general case (without these assumptions) can be handled with minor changes to the proposed algorithm.

The algorithm $(\lambda,A)-$\textbf{TSS on Trees} on p.~\pageref{algo} considers each node for being included in the target set $S$ in a bottom-up fashion. Each node is considered after all its children.
Leaves are never added to $S$ because there is always an optimal solution
in which the target set  consists of  internal nodes only. Indeed,
since all  leaves have thresholds equal to $1$, starting from any target  set
containing  some leaves we can get a solution of at most the same size by substituting each targeted leaf by its parent.

Thereafter, for each non-leaf node $v$, the algorithm checks whether the partial solution $S$ constructed so far allows to activate all the nodes in $T(v)\cap A$ (where $A$ is the set of nodes which must be activated) within round $\lambda$: the algorithm computes the round $\tau=\lambda-\tpth{v}$  by which $v$ has to be activated (line 12 of the pseudocode), where $\tpth{v}$ denotes the maximum length of a path from  $v$ to one of its descendants which  requires $v$'s influence to become active by round $\lambda$.  Notice that   $\tau< \lambda$ when
there exists a vertex in the subtree $T(v)$ which has to be activated
by time $\lambda$, and this can  happen only if $v$ is activated by time $\tau$. Then the algorithm computes the set $Act(v)$ consisting of those $v$'s children which are activated at round $\tau -1$ (line 13).
The algorithm is based on the following three observations \textbf{(a)}, \textbf{(b)}, and \textbf{(c)}
(assuming that $v$ is in the set of nodes which must be activated):

\textbf{(a)} $v$ must be included in the target set solution $S$  whenever the nodes belonging to  $Act(v)\cup \{p(v)\}$ do not suffice to activate $v,$ i.e., the current partial solution is such  that  at most $t(v)-2$ children of $v$ can be active at round $\tau -1.$

\textbf{(b)} $v$ must be included in $S$  if $\tau=0$  (i.e., $\lambda=\tpth{v}$). Indeed, in this case, there exists a vertex in $T(v)$, at distance $\lambda$ from $v$, which requires $v$'s influence to be activated, and this can only happen if $v$ is activated at round $0$.

These two cases for the activation of $v$ are taken care by lines 19-21 of the pseudocode. If neither \textbf{(a)} nor \textbf{(b)} is verified, then $v$ is not activated. However, it might be that the algorithm has to  guarantee the activation of some other node in the subtree $T(v)$. To deal with  such a case, when

\textbf{(c)} the size of the set $Act(v)$ is  $t(v)-1$,  then  the algorithm puts  $p(v)$ in the set $A$ of nodes to be activated; moreover,   the value of the parameter $\pth{v}$ is updated coherently in such a way to correctly compute the value of $\tpth{p(v)}$  which  assures that  $p(v)$ gets active within round $\lambda-\tpth{p(v)}$ (see lines 22-24).

For the root of the tree, which has no parent,  case \textbf{(c)}  is managed as  case \textbf{(a)} (see lines 26-28).

In order to keep track of the above cases while traversing the tree bottom-up, the algorithm uses the following parameters:

\noindent -- $\tm{v}$ assume value equal to the round (of the activation process with target set $S$) in which $v$ would be activated only
thanks to its children and irrespectively of the status of its parent. Namely,
$\tm{v}= \infty$ \ { if  $v$  is a leaf}, $\tm{v}=0$ if $v \in S$, and
$\tm{v}=1+ \min^{t(v)} \{\tm{u}  \ | \ u \in  C(v) \}$    { otherwise}.
Here  $\min^{t(v)} C$ denotes the $t(v)$--th smallest element in the set $C$.

\noindent -- $\pth{v}$  assume value equal to
$-1$ in case  $v$'s parent is not among the activators of $v$; otherwise,  assume value equal to
the maximum length of a path from  $v$ to one of its descendants which
(during  the activation process with target set $S$) requires $v$'s influence in order to become active.
It will be shown that    during the activation
process with target set $S$, for each node $v\in A$  we have $v \in  \Actt{S}{\min\{\lambda- \max_{u \in C(v)} \pth{u}-1, \tm{v}\}},$
for each node $v \in A$. Moreover, the algorithm  maintains a set $A'\supseteq A$ of nodes to be activated. Initially $A'=A$,
the set $A'$ can be enlarged  when the algorithm decides not to include in $S$ the node $v$ under consideration  but to use $p(v)$ for $v$'s activation, like in the case \textbf{(c)} above.

In the rest of the section, we prove Theorem~\ref{teo-alberi}.
\begin {theorem}\label{teo-alberi}
Algorithm \textbf{$(\lambda,A)-$TSS on Tree} computes, in time $O(|V|)$, an optimal solution for the {\sc $(\lambda,A)-$ Target Set Selection} problem  
on a tree.
\end{theorem}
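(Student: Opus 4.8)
The plan is to prove two things about the algorithm: first, that the set $S$ it outputs is feasible, i.e.\ every node of $A$ (in fact every node of the enlarged set $A'$) is activated within round $\lambda$ by the activation process started at $S$; and second, that $|S|$ is minimum. Both parts will be proved by bottom-up induction on the subtrees $T(v)$, following the order in which the algorithm processes nodes. The running time bound is easy and should be dispatched first: each node $v$ is processed once, and the work done there — computing $\tm{v}$, $\pth{v}$, $\tpth{v}$, the set $\A{v}$, and updating $A'$ and $\pth{p(v)}$ — touches only $v$ and its children, so the total is $O(\sum_v (1+|C(v)|)) = O(|V|)$, provided the selection of the $t(v)$-th smallest among the children's $\tm{\cdot}$ values is done in linear time (e.g.\ via a linear-time selection routine, or by exploiting that we only need to compare with a running threshold).

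For feasibility, I would first establish the invariant stated in the text: at the moment $v$ is processed, with $S$ the partial target set built so far restricted to $T(v)$, one has
$$v \in \Actt{S}{\min\{\lambda - \max_{u\in C(v)}\pth{u} - 1,\ \tm{v}\}}$$
whenever $v\in A'$, and moreover that $\pth{v}$ correctly records the length of the longest ``demand path'' hanging below $v$ that still needs $v$'s help. The induction step is a case analysis mirroring (a), (b), (c): if $v$ is put into $S$ (lines 19--21 or 26--28) then $\tm{v}=0$ and the claim is immediate; if $v$ is not put into $S$ because $|\A{v}| \ge t(v)$ already, then $v$ activates at round $\tau$ from its children alone and $\tpth{v}$ is set so that this is $\le \lambda$; if $|\A{v}| = t(v)-1$ (case (c)) then $v$ needs exactly one more activator, which will be $p(v)$, so we add $p(v)$ to $A'$ and enlarge $\pth{v}$ to $1+\max$ over the relevant children so that $\tpth{p(v)}$ will force $p(v)$ active early enough. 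One then checks that a node activated by round $\tau-1$ among $\A{v}$, together with the forced activation of $p(v)$ by round $\tau-1$ guaranteed inductively, gives $v$ active by round $\tau = \lambda - \tpth{v}$, and hence the descendant that created the demand is active by round $\lambda$. The base case is a leaf, where $A'\cap\{\text{leaves}\}$ is never in $A'$ after the parent is processed — or rather, leaves in $A$ are handled because their threshold is $1$ and any activated neighbor suffices; this needs the earlier remark that leaves need not be targeted.

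For optimality, I would argue by an exchange/lower-bound argument: show that for every node $v$ added to $S$ by the algorithm, any feasible solution must ``pay'' for the subtree $T(v)$ in a way that can be charged disjointly. Concretely, define for each processed $v$ with $v\in S$ a witness set in $T(v)$ (the descendant forcing the deadline, together with the structure of $\A{v}$) showing that without some target vertex ``assigned'' to $T(v)$ the demand cannot be met in time; since the algorithm only adds $v$ when (a) or (b) holds — i.e.\ when the children plus the parent genuinely cannot activate $v$ in time — this target cannot be avoided, and distinct added nodes get disjoint witnesses because the algorithm processes subtrees bottom-up and propagates unmet demand to the parent only in the non-added case (c). Formalizing this charging so the witnesses are genuinely disjoint, and handling the interaction between the deadline bookkeeping $\pth{\cdot}$ and the greedy choice, is where the real work lies.

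The main obstacle I expect is the optimality proof, specifically proving that the greedy choice in case (c) — deferring $v$'s activation to its parent rather than targeting $v$ — is never worse than targeting $v$. This requires an exchange argument showing that any optimal solution can be transformed, without increasing its size, into one that agrees with the algorithm on $T(v)$; the subtlety is that pushing a demand up to $p(v)$ changes the deadline $\tpth{p(v)}$, so the transformation must simultaneously respect the latency constraints of all sibling demands at $p(v)$. Making this induction hypothesis strong enough — roughly, ``there is an optimal solution restricted to $T(v)$ of size equal to the algorithm's, that activates $T(v)\cap A'$ by the deadline $\tau$ and that does not target $v$ unless the algorithm does'' — and verifying it survives each of the four cases is the crux.
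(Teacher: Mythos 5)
Your time-complexity analysis and your feasibility invariant essentially match the paper's: the paper proves feasibility (Lemma~\ref{lemma-corr}) by induction on the value $a(v)=\min\{\lambda-\tpth{v},\tm{v}\}$, showing $v\in\Actt{S}{a(v)}$ for every $v\in A'$, which is the invariant you state. The genuine gap is in the optimality half, which is the bulk of the theorem: you describe two possible strategies (a disjoint-witness charging argument, and an exchange argument transforming an optimal solution so that it agrees with the algorithm on $T(v)$), but you carry out neither, and you yourself flag the crux --- showing that deferring $v$'s activation to $p(v)$ in case (c) is never worse while simultaneously respecting the deadlines of sibling demands at $p(v)$ --- as unresolved. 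That is precisely where a naive exchange breaks down: moving a target out of $T(v)$ changes $\tpth{p(v)}$ and hence the deadline $\lambda-\tpth{p(v)}$ faced by every other child of $p(v)$, so the induction hypothesis you sketch (``an optimal solution of equal size on $T(v)$ that does not target $v$ unless the algorithm does'') is not obviously maintainable. As written, the optimality claim is a plan rather than a proof.

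For comparison, the paper does not use an exchange or charging argument at all. It defines, for an arbitrary feasible solution $X$, subtree-local quantities $\tms{v}{X}$, $\pths{v}{X}$, $\tpths{v}{X}$, shows they coincide with the algorithm's $\tm{v}$, $\pth{v}$, $\A{v}$ when $X=S$ (Lemma~\ref{lemma:round-path-values}), proves via an activation-path argument (Lemma~\ref{lem:activationPath}) that any feasible $X$ with $v\notin X$ must satisfy $\tpths{v}{X}<\lambda$ and have at least $t(v)-1$ children of $v$ active by round $\lambda-\tpths{v}{X}-1$ (Property~\ref{p3}), and then compares $s(v)=|S\cap T(v)|$ with $o(v)=|O\cap T(v)|$ for an optimal $O$ by induction on the height of $v$. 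The engine is a strengthened claim: if the algorithm's solution is ``behind'' $O$ at $v$, i.e.\ $\pths{v}{S}>\pths{v}{O}$ or $\tms{v}{S}>\tms{v}{O}$, then $s(v)<o(v)$ strictly; this unit of slack is exactly what pays for the extra target in the case $v\in S$, $v\notin O$. To complete your proof you would either need this kind of quantitative strengthening (lateness forces strict savings in the subtree), or you would have to solve the deadline-interaction problem in the exchange argument --- a difficulty the paper's counting approach deliberately sidesteps.
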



		\begin{algorithm}\label{algo}
\SetAlgoInsideSkip{footnotesize}
 \SetCommentSty{footnotesize}
\SetAlFnt{footnotesize}
\SetKwInput{KwData}{Input}
\SetKwInput{KwResult}{Output}
\caption{ \ \   $(\lambda,A)-$\textbf{TSS on Trees}\label{alg:relation2}}
\KwData { A tree $T=(V,E)$,  thresholds function $t : V \rightarrow \mathbb{N}$, a latency bound $\lambda \in \mathbb{N}$ and  a set to be activated $A\subseteq V$.}
\KwResult{ $S\subseteq V$ of minimum size such that $A \subseteq Active[S,\lambda] .$}\BlankLine
$S= \emptyset;$\\
$A'=A;$\\
Fix a root $r\in V$ \tcp*[f]{ $T(r)$ denotes the tree $T$ rooted at $r$}\\
\ForAll { \em $v$ in the set of \ $T(r)$ leaves}{
$\tm{v}=\infty$\\
\eIf(\tcp*[f]{ $v$ belongs to the set of nodes to be activated}){$v\in A'$}{
$A'=A'\cup \{p(v)\}$ \tcp*[f]{$p(v)$ denotes $v$'s parent}\\
$\pth{v}=0$ 
}{$\pth{v}=-1$}
}
\ForAll { \em $v$ in the set of \ $T(r)$ internal nodes, listed in reverse order with respect to the time they are visited by a breadth-first traversal from $r$  }
{	$\tpth{v}=1+\max_{u \in C(v)}\pth{u} $ \tcp*[f]{$C(v)$ is the set of $v$'children}\\
		$\A{v}=\{ u \in C(v) \ | \ \tm{u} < \lambda-\tpth{v} \}$ \\
		 $\pth{v}=-1$\\
	   $\tm{v}=1+ \min^{t(v)} \{\tm{u}  \ | \ u \in  C(v) \}$ \\	
		{\If(\tcp*[f]{$v$ has to be activated}){$v \in A'$}{
			    \eIf{$v\neq r$} {
    		 \Switch{}{
		    	  \Case(){$(|\A{v}| {\leq} t(v){-}2)$ OR $(\tpth{v}=\lambda)$}
		    	  		{$S=S\cup \{v\}$ \tcp*[f]{$v$ has to be in the target set}  \\
    			$\tm{v}=0$}
	    	  \Case(){$(|\A{v}|=t(v){-}1)$  AND $(\tpth{v}<\lambda)$}
	    	  		{$A'=A'\cup \{p(v)\}$ \tcp*[f]{$v$ will be activated thanks to its parent $p(v)$}\\
	    	  		$\pth{v}=\tpth{v}$
	    	  		}
		    	  }
       }(\tcp*[f]{ $v$ is the root})		
  			{\If{$(|\A{v}|\leq t(v){-}1)$}{
         	$S=S\cup \{v\}$\\
         	$\tm{v}=0$
    		}
    		}
    	}
   }
   }
 \textbf{return} ($S$)
 \end{algorithm}

\begin{figure}
\centerline{\epsfig{figure=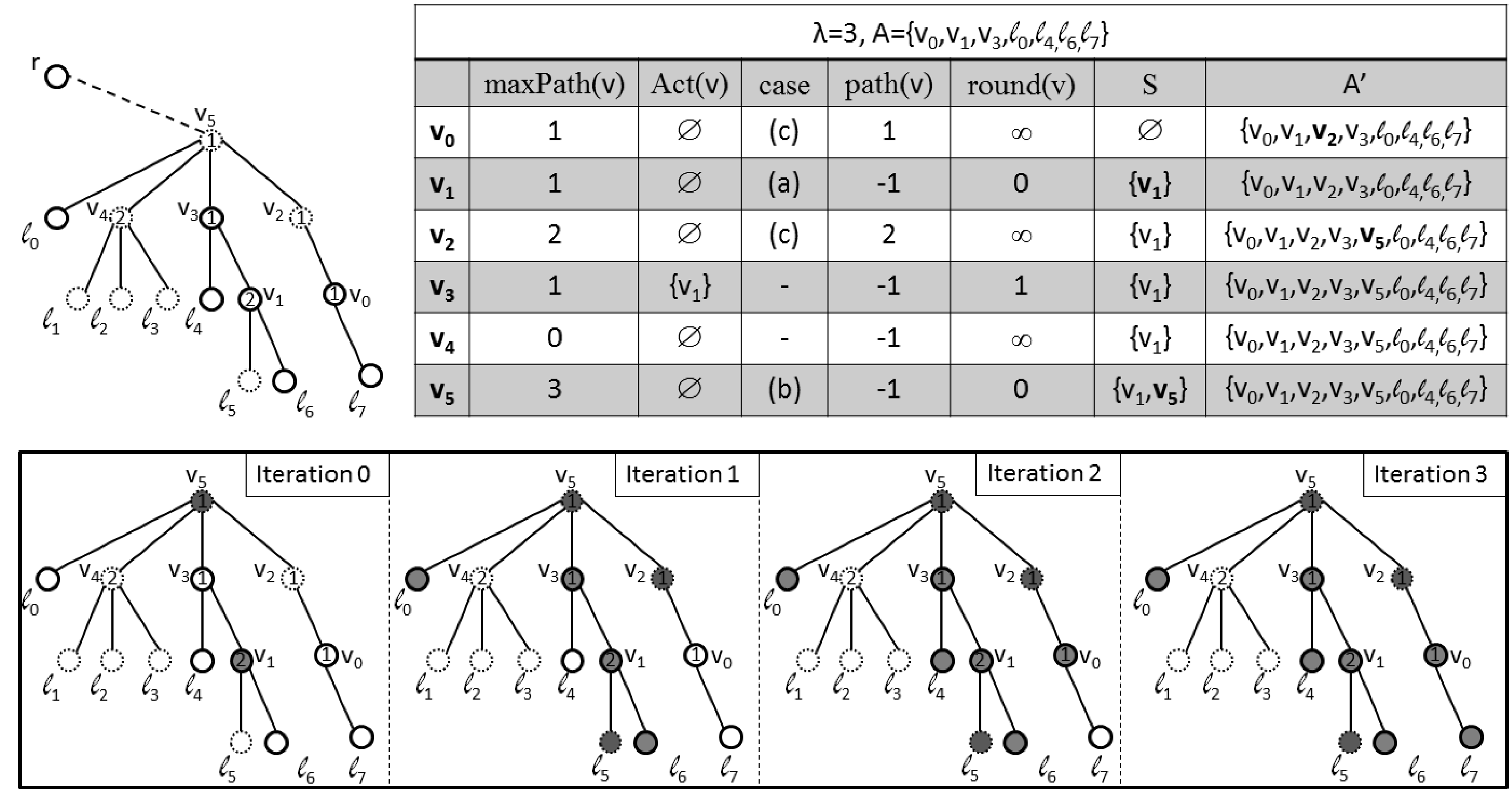,width=13.0truecm,height=5.7truecm}}
	\caption{  An example of execution of the algorithm $(\lambda,A)-$\textbf{TSS on Tree}:
 (left) a subtree rooted in $v_5$ (subscripts describe the order in which nodes are analyzed by the algorithm),  each node is depicted as a circle and  its threshold is given inside the circle. Circles having a solid border represent nodes in the set $A$; (right) the first $6$ steps of the algorithm are shown in the table;  (bottom) the activation process is shown. Activated  nodes are  shaded. At  round 0, 
 $S=\{v_1,v_5\}$.}	
\end{figure}


\noindent
{\bf Time complexity.} 
 The initialization (line 1-10) requires time $O(|V|)$.
The order in which nodes have to be considered is determined using a BFS which requires time $O(|V|)$ on a tree.
 The forall (line 11) considers all the internal nodes: the algorithm analyzes each internal node $v$ in time $O(|C(v)|)$. 
We notice that the computation in line $15$  can be executed in $O(|C(v)|)$ by using an algorithm that solve the selection problem in linear time (see for instance \cite{CLRS01}). Overall the complexity of the algorithm is $O(|V|)+ \sum_{v\in V} O(|C(v)|)= O(|V|).$

\medskip

\noindent {\bf Correctness.}
Consider the computed solution $S$. Let $\Actt{S}{0}=S$ and $\Actt{S}{i}$ be the sets of nodes which become active within  the
$i$--th round of the activation process.

\begin{lemma}\label{lemma-corr}
Algorithm {\bf $(\lambda,A)$--TSS on Tree} outputs a solution for the {\sc $(\lambda,A)$-Target Set Selection} problem  on $T=(V,E)$.
\end{lemma}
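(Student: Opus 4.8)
The plan is to prove, by induction on the bottom-up order in which the algorithm visits the nodes, the invariant stated informally in the text: for every node $v$ that lies in the set $A'$ at the moment $v$ is processed, we have $v\in\Actt{S}{\min\{\lambda-\max_{u\in C(v)}\pth{u}-1,\ \tm{v}\}}$, and moreover $\tpth{v}$ correctly records the maximum length of a path from $v$ down to a descendant that needs $v$'s influence (equivalently, $v$ must be active by round $\lambda-\tpth{v}$ in order for all of $T(v)\cap A$ to be activated within round $\lambda$). Since $A\subseteq A'$ and for the root $r$ the quantity $\max_{u\in C(r)}\pth{u}$ together with the root-specific test in lines 26--28 forces $r$ (if $r\in A'$) to be active by round $\lambda$, applying the invariant at $r$ and then propagating it down will give $A\subseteq\Actt{S}{\lambda}$, which is exactly the claim of the lemma.

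First I would set up the base case: leaves. A leaf $v$ has $\tm{v}=\infty$; if $v\in A'$ the algorithm sets $\pth{v}=0$ and adds $p(v)$ to $A'$, otherwise $\pth{v}=-1$. Here the invariant is vacuous for $v$ itself (leaves are never in the original $A'$ unless forced, and a leaf with threshold $1$ is activated as soon as its parent is), but the bookkeeping ($\pth{v}=0$, $p(v)\in A'$) is exactly what the parent needs. Then for the inductive step I would take an internal node $v$, processed after all its children, and split into the cases the algorithm distinguishes. Case analysis: (i) if $v\notin A'$ there is nothing to prove for $v$ and $\pth{v}=-1$ is correct since no descendant of $v$ that needs $v$'s influence has been flagged. (ii) If $v\in A'$ and either $|\A{v}|\le t(v)-2$ or $\tpth{v}=\lambda$, then $v$ is put in $S$, so $\tm{v}=0$ and $v\in\Actt{S}{0}$; I must check that this is necessary (otherwise the solution would not be optimal, but for \emph{correctness} only the "sufficient" direction is needed) and that $\pth{v}=-1$ is then correct because an initially-active $v$ needs no influence from its parent. (iii) If $v\in A'$, $v\neq r$, and $|\A{v}|=t(v)-1$ with $\tpth{v}<\lambda$: here $v$ is \emph{not} added to $S$; I need to argue that once the parent $p(v)$ becomes active (which is guaranteed because $p(v)$ is added to $A'$ and, by the induction applied \emph{later} to $p(v)$, $p(v)$ is active by round $\lambda-\tpth{p(v)}\le \lambda-\pth{v}-1$), vertex $v$ gets its $t(v)$-th active neighbour and becomes active by round $\min\{\tm{v},\lambda-\pth{v}-1+1\}$ — this is where the definition $\tpth{v}=1+\max_{u\in C(v)}\pth{u}$ and the update $\pth{v}=\tpth{v}$ feed into the parent's computation. (iv) The root case (lines 26--28): $|\A{r}|\le t(r)-1$ forces $r\in S$; if $|\A{r}|\ge t(r)$ then $r$ is activated by its children at round $\tm{r}=1+\min^{t(r)}\{\tm{u}\}$, and I check $\tm{r}\le\lambda$ using $\tpth{r}\ge 0$ together with the fact that every child $u\in\A{r}$ has $\tm{u}<\lambda-\tpth{r}$.

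The crucial technical point binding everything together is the relationship, for a node $v$ whose activation relies on its parent, between the round by which $v$ becomes active and the parameters $\tm{\cdot}$ and $\pth{\cdot}$; concretely one shows by a nested induction on the depth of the descendant that if $p(v)\in\Actt{S}{\lambda-\tpth{p(v)}}$ then every node on the relevant downward path from $p(v)$ through $v$ becomes active in time, using $\tpth{p(v)}\ge 1+\pth{v}$ and $\A{v}$ having size exactly $t(v)-1$ so that the single parent edge suffices. I expect the main obstacle to be handling the interaction between the dynamically growing set $A'$ and the order of processing: a node $v$ may be flagged into $A'$ by a child that is processed much earlier, so the invariant must be stated for the value of $A'$ \emph{at processing time} and one has to verify that no later modification of $A'$ invalidates a claim already established for a node below. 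Carefully ordering the induction by the BFS-reverse order in which the algorithm processes nodes (so that when $v$ is processed all descendants are final but ancestors are not yet processed), and separating the "$v$ is active by round $\tau$" statement from the "$\tpth{v}$ is correct" statement, should make this manageable; the optimality half of Theorem~\ref{teo-alberi} is deferred and is not needed for this lemma.
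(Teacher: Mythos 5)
Your plan has the right ingredients (the bookkeeping role of $\tm{v}$, $\pth{v}$, $\tpth{v}$, and the fact that in the case $|\A{v}|=t(v)-1$ the node $v$ is rescued by its parent), but the induction as you state it does not go through: you induct ``on the bottom-up order in which the algorithm visits the nodes,'' yet in your case (iii) you justify the activation of $v$ by invoking ``the induction applied \emph{later} to $p(v)$.'' Under an induction ordered children-before-parents, the statement for $p(v)$ is simply not available when you treat $v$; this is not a presentational issue but a circular dependence, since the activation of $v$ genuinely requires the activation of an ancestor, and the ancestor's deadline in turn was created by $v$ (via $A'$ and $\pth{v}$). Your proposed repair---first establish the invariant at the root and then run a ``nested induction on the depth of the descendant'' propagating downward---is only gestured at; to make it work you would have to separate and prove two statements (a purely bottom-up one, that $u\in\Actt{S}{\tm{u}}$ whenever $\tm{u}<\infty$, and a top-down one along the chains of nodes with $\pth{\cdot}\ge 0$), and at that point you are no longer doing the induction you announced.

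The paper avoids this altogether by choosing a different induction variable: it sets $a(v)=\min\{\lambda-\tpth{v},\,\tm{v}\}$ (with $\tpth{v}=0$ for leaves) and proves by induction on $a=0,1,\dots$ that every $v\in A'$ with $a(v)=a$ satisfies $v\in\Actt{S}{a}$. The base case $a=0$ forces $v\in S$. For $a>0$, if $|\A{v}|\ge t(v)$ then every $u\in\A{v}$ has $a(u)\le\tm{u}\le a-1$; and in the critical case $|\A{v}|=t(v)-1$, $v\ne r$, one has $a(v)=\lambda-\tpth{v}$, the children in $\A{v}$ again have $a(u)\le a-1$, and---this is the point your ordering cannot deliver---the parent also has $a(p(v))\le\lambda-\tpth{p(v)}\le\lambda-\tpth{v}-1=a-1$ because the algorithm guarantees $\tpth{p(v)}\ge\tpth{v}+1$ and puts $p(v)\in A'$. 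So a single induction hypothesis covers both the children and the parent, $\{p(v)\}\cup\A{v}\subseteq\Actt{S}{a-1}$ supplies the $t(v)$ active neighbours, and the lemma follows from the closing observation that $a(v)\le\lambda$ for all $v\in A'$ since $\tpth{v}\ge 0$. If you reorganize your argument around this deadline parameter (or fully work out your two-phase bottom-up/top-down variant), the rest of your case analysis is essentially the paper's.
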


\begin{proof}
Given a node $v\in V$, let $a(v)=\min\{\lambda-\tpth{v}, \tm{v}\}$; for a leaf node  we assume
$\tpth{v}=0$.
  We prove, by induction on $a = 0, 1, \dots,$ that
 for each  $v\in A'$, s.t. $a(v) = a$ we have
\begin{equation}
	v\in \Actt{S}{a}
\end{equation}
For $a=0,$ let $v$ be a node such that $a(v)=0$. This implies
that $\tm{v}=0$  or $\lambda-\tpth{v}=0$; therefore,   $v\in S=\Actt{S}{0} $.

Now fix $a > 0$ and assume that $w \in \Actt{S}{a(w)}$ for any node $w$ with $a(w)\leq a-1$. We will prove
that $v \in \Actt{S}{a(v)}$ holds for any node $v$ with $a(v)= a$.

Let $v$ be such that $a(v) = a.$ When $v$ is processed, there are three possible cases:

\begin{sloppypar}
\begin{itemize}
\item CASE $v$ is added to the target set $S$.\\
 Actually, this case cannot occur under the standing hypothesis that $a>0$ since, if $v \in S$ then  $v\in \Actt{S}{0} $ which
 would imply $a(v)=0 < a$.
\item CASE {$(|\A{v}|\geq t(v)$)}.\\
We know that for each $u\in \A{v}$ it holds
$\tm{u}< \lambda-\tpth{v}$.
\\
In case $a=\lambda-\tpth{v}$, we have $a(u) \leq \tm{u} \leq \lambda-\tpth{v}-1=a-1.$
\\
Analogously, if $a=\tm{v}$. The algorithm poses $\tm{v}\geq \tm{u}+1$ for each $u\in \A{v}$. Therefore,
$a(u)\leq \tm{u}\leq a-1$.\\
In both the above cases  the inductive hypothesis applies to each $u\in \A{v}$, that is $\A{v}\subseteq \Actt{S}{a-1} $.
Since $|\A{v}|\geq t(v)$ we have $v\in \Actt{S}{a} $.
\item CASE $(|\A{v}|=t(v){-}1)$, $v\neq r$.\\
In such a case the algorithm sets $\tm{v}=1+ \min^{t(v)} \{\tm{u}  \ | \ u \in  C(v) \}$ where $\min^{t(v)} C$ denotes
the $t(v)$--th smallest element in the set $C$. Since $|\A{v}|=t(v){-}1$, we have that $\tm{v}>\lambda-\tpth{v}$,
hence $a(v)=\lambda-\tpth{v}$.
\\
Recalling that for each $u\in \A{v}$ it holds
$\tm{u}< \lambda-\tpth{v}$, as above we have that the inductive hypothesis applies to each $u\in \A{v}$, that is $\A{v}\subseteq \Actt{S}{a-1}$.
\\
Consider now the parent $p(v)$ of $v$.
The algorithm implies $\tpth{p(v)}\geq \tpth{v}+1$.
Hence $\lambda-\tpth{p(v)} \leq \lambda-\tpth{v}-1=a-1$ and the inductive hypothesis applies also to $p(v)$.
Therefore, $\{p(v)\}\cup \A{v}$ is a subset of size $t(v)$ of $\Actt{S}{a-1} $ and $v \in \Actt{S}{a} $.
\end{itemize}
\end{sloppypar}
We finally notice that $a(v)=\min\{\lambda-\tpth{v}, \tm{v}\}\leq \lambda$ for each $v\in A'$.
Indeed, the smallest possible value of $\pth{}$ is $-1$, which implies that $\tpth{v}\geq 0$ for any $v$.
\end{proof}

\remove{
\noindent {\it Proof (sketch).}
Given a node $v\in V$, let $a(v)=\min\{\lambda-\tpth{v}, \tm{v}\}$, where
$maxPath(v)= 1+max_{u\in C(v)} path(u)$; for definiteness we set
$\tpth{v}=0$ when $v$ is a leaf. We prove by induction on $a = 0,1,\dots,$ that
 for each  $v\in A'$, such that $a(v) = a,$ it holds that $v\in \Actt{S}{a}$.
The definition of $\tpth{v}$ and the fact that the smallest possible value of $\pth{}$ is $-1$, gives that  $\tpth{v}\geq 0$ for any $v$.
Therefore,  $a(v)=\min\{\lambda-\tpth{v}, \tm{v}\}\leq \lambda$ for each $v\in A'$, from which we have the desired result.
\hfill
}

Let $T(r)=(V,E)$ be a tree rooted at a  $r \in V$,  and let $X\subseteq V$ be a target set such that $Active[X,\lambda] \supseteq A$. Let $T(v)$ be the subtree of $T(r)$ rooted at a node $v$.
 Henceforth let  $\Acttt{X}{i}{T(v)}$ be the set of nodes that is active at round $i$ by targeting $X\cap T(v)$ in the subtree $T(v)$.
 Notice that while $X$ is a target set for $T(r)$  this not necessarily means that $X\cap T(v)$ is a target set for $T(v)$.\\

\begin{description}
	\item[--] $\tms{v}{X}=\begin{cases}
	i & \mbox { if $v \in \Acttt{X}{i}{T(v)} \setminus \Acttt{X}{i-1}{T(v)}$} \\
	0 & \mbox { if  $v \in X$} \\
	\infty & \mbox { otherwise}
	\end{cases} $
	\item[--] $\pths{v}{X}=\begin{cases}	
			0 & \mbox { if  $v \in A$ is a leaf AND $v \notin X$} \\
		i & \mbox { if $(\tpths{v}{X}=i<\lambda)$} \\
	& \mbox{ $\quad$ AND $(|\Acttt{X}{\lambda-i-1}{T(v)} \cap C(v)|= t(v)-1)$ } \\
	& \mbox{ $\quad$ AND $(v \in A$ OR $\tpths{v}{X} > 0)$}  \mbox{  AND $(v \notin X)$}  \\
	\end{cases} $
	\item[--] $\tpths{v}{X}=   \begin{cases} 1+\max_{u \in C(v)} \pths{u}{X} & \mbox { if $v$ is an internal node ($C(v)\neq \emptyset$)} \\
	0 & \mbox { if $v$ is a leaf.}
	\end{cases}$
\end{description}

%
	%
 When $X=S$ then the values $\tm{v}$ and $\pth{v}$, computed by the algorithm, correspond to the values defined above.
 %
%
%

\begin{lemma} \label{lemma:round-path-values}
If $X=S$ then for each node $v\in V,$ $\tm{v}=\tms{v}{S}$, $\pth{v}=\pths{v}{S}$ and $\A{v}=\Acttt{S}{\lambda-\tpths{v}{S}-1}{T(v)}\cap C(v)$.
\end{lemma}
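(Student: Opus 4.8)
The plan is to prove all three equalities simultaneously by a single induction carried out in the order in which the algorithm processes the vertices: leaves first (in any order), then the internal vertices in reverse BFS order from $r$. Since every child of a vertex $v$ is processed strictly before $v$, this is equivalent to a bottom-up structural induction in which, at the moment $v$ is handled, the inductive hypothesis supplies the \emph{final} values of $\tm{u}$, $\pth{u}$, $\A{u}$ for every child $u$ of $v$; one should first record that, once a vertex has been processed, none of its three entries is ever modified again. Before the induction I would also pin down the semantics of membership in $A'$: at the moment $v$ is processed, $v\in A'$ holds iff $v\in A$, or $v$ has a child $u$ which is either a leaf of $A$ or an internal vertex for which the algorithm executed Case~(c); using the inductive hypothesis for the children, this is exactly the condition ``$v\in A$ or $\tpths{v}{S}>0$'' appearing in the definition of $\pths{v}{S}$, and moreover $v\notin A'$ forces $\pths{u}{S}=-1$ for every child, hence $\tpths{v}{S}=0$.

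For the base case $v$ is a leaf: $\tm{v}=\infty=\tms{v}{S}$ because leaves are never put into $S$ and $T(v)$ is the single, untargeted vertex $v$; $\pth{v}$ equals $0$ or $-1$ according to whether $v\in A'$ at that moment, which (no internal vertex having yet been processed) is the same as $v\in A$, and this matches $\pths{v}{S}$ read off from its definition with $\tpths{v}{S}=0$; the identity for $\A{v}$ is vacuous since $C(v)=\emptyset$.

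For the inductive step let $v$ be internal. First, line~12 gives $\tpth{v}=1+\max_{u\in C(v)}\pth{u}=1+\max_{u\in C(v)}\pths{u}{S}=\tpths{v}{S}$ by the inductive hypothesis. Next, using $\tm{u}=\tms{u}{S}$ for children, line~13 computes $\A{v}=\{u\in C(v):\tms{u}{S}\le\lambda-\tpths{v}{S}-1\}$; I would then show this coincides with $\Acttt{S}{\lambda-\tpths{v}{S}-1}{T(v)}\cap C(v)$ by inspecting the activation process of $T(v)$ with target $S\cap V(T(v))$: the inclusion $\subseteq$ is immediate from monotonicity of the process under enlarging the graph from $T(u)$ to $T(v)$, while for $\supseteq$ one argues that a child active in $T(v)$ by round $\lambda-\tpths{v}{S}-1$ must already be self-activated in its own subtree by that round, i.e.\ that $v$ itself does not become active in $T(v)$ soon enough to help a child within this window. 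Finally I would split into the algorithm's cases. If $v$ is placed into $S$ (Case~(a)/(b), or the corresponding root case), then $\tm{v}=0=\tms{v}{S}$ and $\pth{v}=-1$, which equals $\pths{v}{S}$ since the clause $v\notin X$ in the second branch of the definition fails. If $v\notin A'$, then $\pth{v}=-1=\pths{v}{S}$ (here $\tpths{v}{S}=0$ and $v\notin A$) and $\tm{v}=1+\min^{t(v)}\{\tm{u}:u\in C(v)\}$; one checks that $\tms{v}{S}$ equals the same expression by the ``no early help from $v$'' argument applied to $v$ (before $v$ is active in $T(v)$ its children are activated there exactly at their self-activation rounds). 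If $v\in A'$, $v\ne r$, and Case~(c) fires, then $\pth{v}=\tpth{v}=\tpths{v}{S}$ and $\tm{v}=1+\min^{t(v)}\{\tm{u}\}$; here I would verify that ``Case~(c) fires'' is precisely the conjunction of conditions in the second branch of the definition of $\pths{v}{S}$ — namely $\tpths{v}{S}<\lambda$, exactly $t(v)-1$ children active at the critical round (using the $\A{v}$ identity), $v\notin S$, and $v\in A$ or $\tpths{v}{S}>0$ (the $A'$ characterization) — and that $\tms{v}{S}$ again equals $1+\min^{t(v)}\{\tms{u}{S}\}$.

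The step I expect to be the crux is the identity for $\A{v}$: making precise that throughout the first $\lambda-\tpths{v}{S}$ rounds of the process on $T(v)$ the notions ``active in $T(v)$'' and ``self-activated in one's own subtree'' agree on the children of $v$. This is exactly the point where the bookkeeping encoded in the $\pth{\cdot}$ and $\tpth{\cdot}$ values has to be shown faithful to the real activation process; a secondary, fiddly obstacle is the bookkeeping matching of the algorithm's enlarged set $A'$ with the purely semantic quantities $A$ and $\tpths{v}{S}$ occurring in the definition of $\pths{v}{S}$.
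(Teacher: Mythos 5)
Your plan follows essentially the same route as the paper's proof: a bottom-up induction (the paper splits it into two passes, first for $\tm{\cdot}$ and then for $\pth{\cdot}$ and $\A{\cdot}$, which is immaterial), the same characterization of membership in $A'$ as ``$v\in A$ or $\tpths{v}{S}>0$'', the identity $\tpth{v}=\tpths{v}{S}$ from the children's values, and the matching of the algorithm's branches (a)/(b)/(c) and the root case against the case distinctions in the definition of $\pths{v}{S}$. Up to that point there is nothing to object to.

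The problem is the step you yourself single out as the crux, the inclusion $\Acttt{S}{\lambda-\tpths{v}{S}-1}{T(v)}\cap C(v)\subseteq\A{v}$. You propose to obtain it from the claim that ``$v$ itself does not become active in $T(v)$ soon enough to help a child within this window'', but you never argue this, and in the uniform form in which you use it (before your case split on what the algorithm does with $v$) it is not available: when the algorithm puts $v$ into $S$, $v$ is a seed of the restricted process on $T(v)$ and is active there already at round $0$. Concretely, take $\lambda=3$, $A=\{v\}$, $v$ a child of the root with $t(v)=2$ and children $u$ (with $t(u)=1$ and one leaf child) and a leaf $w$: the algorithm finds $\tm{u}=\tm{w}=\infty$, hence $\A{v}=\emptyset$, and targets $v$ by case (a) since $|\A{v}|=0=t(v)-2$; here $\tpths{v}{S}=0$, yet $u,w\in\Acttt{S}{2}{T(v)}\cap C(v)$ because the seed $v$ activates both children at round $1$ inside $T(v)$, while $\tms{u}{S}=\tms{w}{S}=\infty$. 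So the ``no early help from $v$'' claim fails precisely in the $v\in S$ branch, and the $\A{v}$ identity cannot be proved (indeed does not hold verbatim) there; any complete argument must establish it case by case, and for $v\notin S$ it still needs the two ingredients you only gesture at, namely that before $v$'s own activation in $T(v)$ the children evolve exactly as in their own subtrees, together with a lower bound on $v$'s activation round in $T(v)$ placing it outside the window. To be fair, you have located exactly the point where the paper's own write-up is silent --- it passes from $u\in\Acttt{S}{\lambda-\tpth{v}-1}{T(v)}$ to $\tms{u}{S}\le\lambda-\tpth{v}-1$ without justification --- but your proposal does not supply the missing argument either, and the auxiliary claim it rests on is false as stated.
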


\begin{proof}
First we show by induction on the height of $v$ that $\tm{v}=\tms{v}{S}$.

\noindent
{\em Induction Basis}: For each leaf $v$ we have $T(v)=v$. Since $v\notin S$, we have that $v\notin  \Acttt{S}{j}{v}$ for any value of $j$. Hence $\tms{v}{S}= \tm{v}=\infty$ (line 5).\\

\begin{sloppypar}
\noindent{\em Induction step}: Let $v$ an internal node and suppose that the claim is true for any children of $v$.\\
If $v\in S$ the claim is trivially true,  $\tms{v}{S}= 0= \tm{v}$ (lines 21 and 28).\\
Otherwise, let $\tm{v}=i=1+ \min^{t(v)} \{\tm{u}  \ | \ u \in  C(v) \}$	 where $\min^{t(v)} C$ denotes the $t(v)$--th smallest element in the set $C$.
By induction $i=1+ \min^{t(v)} \{\tms{u}{S}  \ | \ u \in  C(v) \}$. There is a set $C_S(v)\subseteq C(v)$ such that $|C_S(v)|=t(v)$, $\forall w \in C_S(v), \tms{w}{S}\leq i-1$ and $\exists u \in C_S(v)$ such that $\tms{u}{S}= i-1$. Hence,
$\forall w \in C_S(v), w \in \Acttt{S}{i-1}{T(w)} $ and $\exists u \in C_S(v)$ such that $u \in \Acttt{S}{i-1}{T(u)} \setminus \Acttt{S}{i-2}{T(u)}$ and we have $v \in \Acttt{S}{i}{T(v)}\setminus \Acttt{S}{i-1}{T(v)}$ which means that  $\tms{v}{S}=i$.
\end{sloppypar}

\medskip
Now we show that $\pth{v}=\pths{v}{S}$ and $\A{v}=\Acttt{S}{\lambda-\tpths{v}{S}-1}{T(v)}\cap C(v)$.
Again, we argue by induction on the height of $v$.

\noindent
{\em Induction Basis.} For each leaf $v$, if $v\in A$ then $\pths{v}{S}=\pth{v}=0$ (line 8). On the other hand if  $v\notin A$ then $\pth{v}=-1$ (line 10). Moreover, since $v\notin A$ and has no children $\pths{v}{S}\neq i$. Hence $\pths{v}{S}=-1.$\\
Moreover since $C(v)=\emptyset$ we have $\A{v}=\Acttt{S}{\lambda-\tpths{v}{S}-1}{T(v)}\cap C(v)=\emptyset.$

\begin{sloppypar}
\noindent {\em Induction Step.} Let $v\neq r$ be an internal node and suppose that the claim is true for any children of $v$.
Hence, $\forall u \in C(v), $ $\pth{u}=\pths{u}{S}$ and we have $\tpths{v}{S}= 1+\max_{u \in C(v)} \pth{u}=\tpth{v}.$
Notice that $\tpth{v}=1+\max_{u \in C(v)} \pths{u}{S}\geq0$.
\end{sloppypar}

We are going to show that $\A{v}=\Acttt{S}{\lambda-\tpths{v}{S}-1}{T(v)}\cap C(v)$.

Let $u \in \A{v}$. Hence $u \in C(v)$ and  $\tm{u}<\lambda-\tpth{v}$. Since $\tm{u}=\tms{u}{S}$ and by induction  $\tpth{v}=\tpths{v}{S}$ we have $\tms{u}{S}\leq \lambda-\tpth{v}-1$. Hence $u \in \Acttt{S}{\lambda-\tpth{v}-1}{T(u)}$ and therefore $u \in \Acttt{S}{\lambda-\tpth{v}-1}{T(v)}\cap C(v)$. Hence  $\A{v} \subseteq \Acttt{S}{\lambda-\tpth{v}-1}{T(v)}\cap C(v)$.

Let $u \in \Acttt{S}{\lambda-\tpth{v}-1}{T(v)}\cap C(v)$. Hence $u \in C(v)$ and $\tms{u}{S}\leq\lambda-\tpth{v}-1$.  Since $\tm{u}=\tms{u}{S}$ and by induction  $\tpth{v}=\tpths{v}{S}$ we have  $\tm{u}< \lambda-\tpth{v}$ and therefore $u \in \A{v}$. Hence  $\A{v}\supseteq \Acttt{S}{\lambda-\tpth{v}-1}{T(v)}\cap C(v)$.

\medskip

Since $v$ is an internal node, in order to show that $\pth{v}=\pths{v}{S}$  two cases have to be considered: $\pths{u}{S}=\tpth{v}$ or $\pths{u}{S}=-1$.

\begin{description}
	\item[case ($\pth{v}=\tpth{v}$):] According to the algorithm this case happens when
\begin{description}
	\item [(a)] $v\in A'$ AND
	\item [(b)]$\tpth{v}<\lambda$ AND
	\item [(c)]$|\A{v}|=t(v)-1$
\end{description}
	
Moreover, when this case occur $v$ is not added to $S$ (i.e., $v \notin S$).
Thanks to (a) we have that  $v\in A'$ if either $v\in A$ (line 2) or $v$ has a children $u$	such that $\pth{u}=\tpth{u}\geq0$ (line 23-24) or $v$ has a children $u$ such that $u \in A$ and $u$ is a leaf, that is $\pth{u}=0$ (line 7-8). Hence we have  $v \in A'$ iff  $v\in A$ OR $\tpths{v}{S} > 0.$\\
\begin{sloppypar}
Thanks to (b) and (c) we have that	$\tpths{v}{S}<\lambda$ and $|\Acttt{S}{\lambda-\tpth{v}-1}{T(v)}\cap C(v)|=t(v)-1.$
Hence using (a), (b) and (c) and the fact that  $v \notin S$ we have $\pths{v}{S}=\tpth{v}$.
\end{sloppypar}
	
	\item[case ($\pth{v}=-1$):] In this case one of the above requirement is not satisfied and we have $\pths{v}{S}=-1$.
\end{description}	

Similar reasoning can be used to show that $\pth{r}=\pths{r}{S}$ and $\A{r}=\Acttt{S}{\lambda-\tpths{r}{S}-1}{T(r)}\cap C(r)$.
\end{proof}

Let $X$ be a target set solution (i.e., $\Actt{X}{\lambda} \supseteq A$). For an edge $(v,u)$ we say that $v$ activates $u$  and write $v\to u$ if $v\in \Actt{X}{i-1}$ and $u\in \Actt{X}{i}\setminus \Actt{X}{i-1}$, for some $1\leq i\leq \lambda$.
An activation path $v\leadsto u$ from $v$ to $u$ is a path in $T$ such that $v=x_0\to x_1\to \ldots\to x_k=u$
 with $x_j\in  \Actt{X}{i_j}\setminus \Actt{X}{i_j-1}$ for $0\leq i_1< i_2<\ldots< i_k\leq \lambda.$ In other words $x_i$ is activated before $x_{i+1}$, for $i=0,\ldots, k-1$.

\begin{lemma}\label{lem:activationPath}
Let $X$  be a target set solutions (i.e., $\Actt{X}{\lambda} \supseteq A$) and $v \in V$. If $\tpths{v}{X}=i$ then there is an activation path of length $i$ in $T(v)$ starting at $v$ and ending at a node $u\in A$.
\end{lemma}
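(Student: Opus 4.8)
The plan is to argue by induction on $i=\tpths{v}{X}$, exploiting the recursive definition of $\tpths{\cdot}{X}$ together with the way $\pths{\cdot}{X}$ is defined. First I would handle the base case $i=0$. By definition $\tpths{v}{X}=0$ only if $v$ is a leaf, and in that case the only way $v$ can be relevant is if $v\in A$ (otherwise $\pths{v}{X}$ would be undefined in the branch that forces $v$'s parent to care about it, but since here we only need $\tpths{v}{X}=0$ and $v$ a leaf, we should observe that the statement is asking for a trivial activation path of length $0$ from $v$ to a node in $A$; so we must check $v\in A$). Actually the cleanest reading: if $v$ is a leaf then $\tpths{v}{X}=0$, and for the conclusion we need $v\in A$; I would note that the hypothesis $\tpths{v}{X}=i$ with the path having to end in $A$ implicitly requires that the ``path'' structure be nonvacuous, and trace back that a leaf $v$ contributes $\pths{v}{X}=0$ to its parent exactly when $v\in A\setminus X$ — so in the genuinely relevant base case $v\in A$ and the length-$0$ path $v$ itself works.

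For the inductive step, suppose $\tpths{v}{X}=i\ge 1$. Then $v$ is internal and by definition $i=1+\max_{u\in C(v)}\pths{u}{X}$, so there is a child $u^\star\in C(v)$ with $\pths{u^\star}{X}=i-1\ge 0$. The key observation is that $\pths{u^\star}{X}=i-1$ (a non-negative value) can only happen in the second branch of the definition of $\pths{\cdot}{X}$, which forces $\tpths{u^\star}{X}=i-1$, forces $|\Acttt{X}{\lambda-(i-1)-1}{T(u^\star)}\cap C(u^\star)|=t(u^\star)-1$, and forces $u^\star\notin X$ together with ($u^\star\in A$ or $\tpths{u^\star}{X}>0$). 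By the induction hypothesis applied to $u^\star$ (whose $\tpths{}{X}$-value is $i-1$), there is an activation path of length $i-1$ in $T(u^\star)$ from $u^\star$ to some node in $A$. It remains to prepend the edge $v\to u^\star$, i.e.\ to show $v\in \Actt{X}{j}\setminus\Actt{X}{j-1}$ and $u^\star\in\Actt{X}{j'}\setminus\Actt{X}{j'-1}$ with $j<j'$ — in other words, that $v$ activates $u^\star$ in the process with target set $X$, and strictly before the first step of the path inside $T(u^\star)$.

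The main obstacle, and the heart of the argument, is precisely this last point: showing that $v$ really does serve as an activator of $u^\star$ in the global process, and that it does so early enough. The condition $|\Acttt{X}{\lambda-i}{T(u^\star)}\cap C(u^\star)|=t(u^\star)-1$ says that within its own subtree $u^\star$ collects only $t(u^\star)-1$ activators by round $\lambda-i$, hence is one short; since $X$ is a valid target set for the whole tree and $u^\star$ (being on a path to a node of $A$ that must activate by round $\lambda$) must itself be active by round $\lambda-i$, the missing activator must be the parent $v$, and $v$ must be active by round $\lambda-i-1$. I would combine this with Lemma~\ref{lemma:round-path-values}-style relations between the ``$X$-superscripted'' quantities and the activation rounds (in particular that $\tms{v}{X}$ and $\Acttt{X}{\cdot}{T(v)}$ control when $v$ becomes active from below, and the path/round bookkeeping $v\in\Actt{S}{\min\{\lambda-\max_{u\in C(v)}\pths{u}{},\;\tms{v}{}\}}$ stated in the text) to pin down that $v\in\Actt{X}{\lambda-i-1}$ while the first node after $u^\star$ on the inner path is activated at some round $>\lambda-i$, so that the concatenation $v\to u^\star\leadsto(\text{node in }A)$ is a genuine activation path of length $1+(i-1)=i$ in $T(v)$. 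A small amount of care is needed to ensure the rounds along the inner path, which the induction hypothesis only guarantees to be strictly increasing and of length $i-1$, are all $\le\lambda$ and start strictly after the round in which $u^\star$ is activated; this follows from the definition of $\tpths{}{X}$ bounding how deep the ``demanding'' descendants lie, together with $\tpths{v}{X}=i\le\lambda$.
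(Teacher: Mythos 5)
Your proposal is correct and follows essentially the same route as the paper: it unrolls the recursive definition of $\tpths{v}{X}$ into a chain of nodes with $\pths{\cdot}{X}$ values $i-1,i-2,\ldots,0$ (the paper writes this chain explicitly as $v=x_i\to x_{i-1}\to\cdots\to x_0$ and then inducts along it, while you induct on $i$ and prepend the top edge), and in both cases the decisive step is the same — the endpoint lies in $A$ and must be active by round $\lambda$, and the condition $|\Acttt{X}{\lambda-j-1}{T(x_j)}\cap C(x_j)|=t(x_j)-1$ coming from $\pths{x_j}{X}\ge 0$ forces each node on the chain to be activated strictly after its parent. Your caveat about the base case matches the paper's own implicit treatment of $i=0$ (the chain bottoms out at a node with $\pths{\cdot}{X}=0$, which is necessarily in $A$), so nothing essential is missing.
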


\begin{proof}{}
Since $\tpths{v}{X}=i$ then there is a path in $T(v)$ from $v$ to a node $u$ such that $v=x_i\to x_{i-1}\to \ldots\to x_0=u$ where for each $i=0,1,\ldots, i-1,$ $\pths{x_i}{X}=i.$
We are able to show by induction that for each $j=0,1,\ldots ,i-1,$ $x_j$ is activated after $x_{j+1}$.

\noindent
{\em Induction basis}: $j=0$. Hence, $\pths{x_0}{X}=0$ which means that $x_0\notin X$. There are two case to consider:

\begin{description}
	\item[($x_0$ is a leaf)] hence $x_0 \in A$. Moreover since $x_0$ has no children we have that $x_0$ will be activated after its parent $x_{1}$.
	\item[($x_0$ is an internal node)] since  $\pths{x_0}{X}=0$ we have that $\tpths{x_0}{X}=0$ hence $x_0 \in A$. Moreover, since $|\Acttt{X}{\lambda-1}{T(x_0)} \cap C(x_0)|= t(x_0)-1$ we have that $x_0$ will be activated after its parent $x_{1}$. Otherwise $x_0$ will not be activated by round $\lambda-1$.
\end{description}

\noindent
{\em Induction step}: $j=i$. Hence, $\pths{x_j}{X}=\tpths{x_j}{X}=j$ which means that $x_i\notin X$. Moreover, by induction,  we know that $\forall j<i$, $x_j$ is activated after $x_{j+1}$. Hence in order to activate $x_0$ by round $\lambda$, $x_j$ has to be activated by round $\lambda-j-1$. Since $|\Acttt{X}{\lambda-j-1}{T(x_j)} \cap C(x_j)|= t(x_j)-1$ we have that $x_j$ will be activated after its parent $x_{j+1}$. Otherwise $x_j$ will not be activated by round $\lambda-j-1$.
\end{proof}

Let $X$  and $Y$ be two target set solutions and $v \in V$. The following properties hold:
	
\begin{proper} 	\label{p1}
If $\tms{v}{X}>\tms{v}{Y}$ and $v \notin Y$ then there exists $u \in C(v)$ such that $\tms{u}{X}>\tms{u}{Y}$.
\end{proper}

\begin{proof}{}
Let $\tms{v}{Y}=i$ we have that $v \in \Acttt{Y}{i}{T(v)} \setminus \Acttt{Y}{i-1}{T(v)}$. Hence, there is a set $C_Y\subseteq C(v)$  such that $|C_Y|=t(v)$ and $\forall u \in C_Y, u \in  \Acttt{Y}{i-1}{T(u)}$  (i.e., $\forall u \in C_Y, \tms{u}{Y}\leq i-1$). On the other hand, since  $\tms{v}{X}>i$ the size of the set $C_X=\{ u \in C(v) | \tms{u}{X} \leq i-1 \} $ is at most $t(v)-1$. Hence there is at least a vertex $u \in C(v)$ such that $\tms{u}{X}>\tms{u}{Y}$.
\end{proof}

\begin{proper}\label{p3}
If $v \notin X$ then $\tpths{v}{X}<\lambda$ AND $|\Acttt{X}{\lambda-\tpths{v}{X}-1}{T(v)} \cap C(v)|> t(v)-2$.
\end{proper}

\begin{sloppypar}
\begin{proof}{}
In the following we show that if either  $\tpths{v}{X}\geq\lambda$ or $|\Acttt{X}{\lambda-\tpths{v}{X}-1}{T(v)} \cap C(v)|\leq t(v)-2$ then $v \in X.$

When $\tpths{v}{X}\geq\lambda$  then by Lemma \ref{lem:activationPath} there is an activation path of length at least $\lambda$ in $T(v)$ starting at $v$ and
ending at a node $u \in A$ and  we have that $v$ has to be active at round $0$.

When $|\Acttt{X}{\lambda-\tpths{v}{X}-1}{T(v)} \cap C(v)|\leq t(v)-2$, since by Lemma \ref{lem:activationPath} there is an activation path of length $\tpths{v}{X}$ starting at $v$ and ending at a node $u \in A$, we have that $v$ has to be active at round $\lambda-\tpths{v}{X}$ (i.e., $v$ should belong to $\Actt{X}{\lambda-\tpths{v}{X}}$). Since $|\Acttt{X}{\lambda-\tpths{v}{X}-1}{T(v)} \cap C(v)|\leq t(v)-2$ then $v$ will not be activated (even considering its parent) at round $\lambda-\tpths{v}{X}$. Hence $v$ has to be in $X$.
\end{proof}	
\end{sloppypar}
	
\begin{lemma}\label{lemma-opt}
Algorithm \textbf{$(\lambda,A)-$TSS on Tree} outputs an optimal solution for the {\sc $(\lambda,A)-$ Target Set Selection} problem  on $T=(V,E)$.
\end{lemma}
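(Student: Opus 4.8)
The plan is to keep the feasibility half, which is exactly Lemma~\ref{lemma-corr}, and to establish minimality by a bottom-up exchange argument. Fix an arbitrary feasible target set $X$ (so $\Actt{X}{\lambda}\supseteq A$) and transform it into $S$ by a sequence of modifications that never increase the cardinality and never destroy feasibility, treating the vertices of $T$ bottom up, in the same order used by the algorithm, so that each vertex is treated after all its proper descendants. The invariant I would maintain is: after a vertex $v$ has been treated, the current set $X'$ is feasible, $|X'|\le|X|$, and $X'$ agrees with $S$ on $v$ and on every vertex treated before it; at the root this gives $X'=S$, hence $|S|\le|X|$. The step at $v$ is \emph{local} because $\tpths{\cdot}{\cdot}$ and the sets $\Acttt{\cdot}{j}{T(v)}\cap C(v)$ depend only on the restriction of the target set to $T(v)\setminus\{v\}$; since all proper descendants of $v$ have already been treated, $X'\cap (T(v)\setminus\{v\})=S\cap(T(v)\setminus\{v\})$ at that moment, so $\tpths{v}{X'}=\tpths{v}{S}=\tpth{v}$ and $\Acttt{X'}{\lambda-\tpth{v}-1}{T(v)}\cap C(v)=\A{v}$, the last equality being Lemma~\ref{lemma:round-path-values}.

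Suppose the algorithm puts $v$ into $S$; then $v\in A'$ and either $|\A{v}|\le t(v)-2$, or $\tpth{v}=\lambda$, or ($v=r$ and $|\A{v}|\le t(v)-1$). I claim $v\in X'$ already, so no modification is needed. Indeed, if $v\notin X'$, applying Property~\ref{p3} to the feasible set $X'$ gives $\tpths{v}{X'}<\lambda$ and $|\Acttt{X'}{\lambda-\tpths{v}{X'}-1}{T(v)}\cap C(v)|\ge t(v)-1$, which, via the equalities above, contradicts the first two alternatives. In the root case one moreover uses that $r$ has no parent: with at most $t(r)-1$ children active by round $\lambda-\tpth{r}-1$ one gets $r\notin\Actt{X'}{\lambda-\tpth{r}}$, while Lemma~\ref{lem:activationPath} provides an activation path of length $\tpth{r}$ from $r$ to some $w\in A$ (degenerating to $w=r$ when $\tpth{r}=0$, which forces $r\in A$ since $r\in A'$); hence $w\notin\Actt{X'}{\lambda}$, a contradiction.

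Suppose the algorithm does not put $v$ into $S$; then $v\notin S$ and either $v\notin A'$, or the algorithm was in Case~(c), where $|\A{v}|=t(v)-1$, $\tpth{v}<\lambda$, and $p(v)$ was added to $A'$ with $\pth{v}$ set to $\tpth{v}$. If $v\notin X'$ do nothing; otherwise delete $v$ from $X'$, which strictly decreases the size and (using the invariant) makes $X'$ agree with $S$ on all of $T(v)$. Showing that this deletion keeps $X'$ feasible --- that dropping a vertex the algorithm deliberately left out never prevents some required vertex from being activated in time --- is the \textbf{main obstacle}. The key is that the parameters $\tm{\cdot}$, $\pth{\cdot}$ and the growth of $A'$ are designed so that the \emph{at most one} activator that $v$ still needs ``from above'' is supplied in time by $p(v)$: in Case~(c) this is recorded by $\pth{v}=\tpth{v}$, which forces the deadline $\lambda-\tpth{p(v)}\le\lambda-\tpth{v}-1$ on $p(v)$, a constraint the bookkeeping carries further up the tree; and when $v\notin A'$, the upward propagation of deadlines into $A'$ was cut off below $v$ (by a vertex of $S$ or by the absence of a required descendant), so no required vertex of $T(v)$ depends on $v$. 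To make this rigorous one compares the activation process of the new set with that of $S$: they coincide on $T(v)$, $S$ is feasible, and for each required $w$ reachable by an activation path from $v$ one invokes Lemma~\ref{lem:activationPath} to locate the binding deadlines along the path and Property~\ref{p1} to propagate, downward along that path, the inequality ``the new set activates each of its vertices no later than $S$ does''; controlling this interaction between $T(v)$ and the rest of the tree is precisely what the parameters are built for.

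Performing these two cases for every vertex in the processing order turns $X$ into $S$ without ever increasing its size, so $|S|\le|X|$, and with Lemma~\ref{lemma-corr} this shows the algorithm's output is optimal. An essentially equivalent route is to prove directly, by bottom-up induction, that $|S\cap T(v)|\le|X\cap T(v)|$ for every feasible $X$, strengthening the inductive hypothesis to record the ``charged-to-parent'' status encoded by $\pth{v}$; the technical core --- namely, in the case $v\in S\setminus X$, accounting for where $X$ must spend the extra budget inside $T(v)$ --- is the same.
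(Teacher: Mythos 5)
Your proposal leaves the decisive step unproven, and the exchange scheme you build around it is in fact false as stated. The invariant ``after each local modification $X'$ is still feasible'' fails in the deletion case, because a vertex that the algorithm skips may be exactly what the \emph{unmodified upper part} of $X$ relies on to activate required vertices \emph{outside} $T(v)$; your discussion only addresses required vertices inside $T(v)$. Concretely, take the path $u-v-r$ rooted at $r$, with $t(u)=1$, $t(v)=2$, $t(r)=1$, $A=\{r\}$, $\lambda=1$. The set $X=\{v\}$ is feasible and optimal, while the algorithm (since $u\notin A$ gives $\pth{u}=-1$, $\tpth{v}=0$, $v\notin A'$) skips $v$ and outputs $S=\{r\}$. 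In your procedure, when $v$ is treated you delete it from $X'$, obtaining $X'=\emptyset$, which is infeasible; and when $r$ is treated (the algorithm adds it to $S$), your Case-1 claim ``$v\in X'$ already'' fails, since $r\notin X'$. So both pillars of the exchange --- feasibility preservation under deletion, and the claim that every vertex selected by the algorithm already lies in the (even optimal) comparison set --- are wrong in general; the algorithm and an optimal solution can be disjoint, and no step-by-step locally feasible transformation of the kind you describe exists. An additional, smaller inaccuracy: $\Acttt{X'}{j}{T(v)}\cap C(v)$ is not determined by $X'\cap(T(v)\setminus\{v\})$ alone, since in the process restricted to $T(v)$ the vertex $v$ itself (if targeted) can help activate its children.

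What actually closes the argument is the global counting that you relegate to your last sentence as ``an essentially equivalent route'': the paper proves, by bottom-up induction, the strengthened claim that $\pths{v}{S}>\pths{v}{O}$ or $\tms{v}{S}>\tms{v}{O}$ implies $s(v)<o(v)$ (where $s(v)=|S\cap T(v)|$, $o(v)=|O\cap T(v)|$), and then a second induction showing $s(v)\le o(v)$ for all $v$; the whole difficulty sits in the case analyses for $v\in S\setminus O$ and $v\notin S\cup O$, where Property~\ref{p3} applied to $O$, Lemma~\ref{lem:activationPath}, and Property~\ref{p1} are combined to locate a child $u$ with $s(u)<o(u)$ against which the extra targeted vertex is charged. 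You correctly name the ingredients (Lemmas~\ref{lemma-corr}, \ref{lemma:round-path-values}, \ref{lem:activationPath}, Properties~\ref{p1}, \ref{p3}) and correctly identify where the technical core lies, but you neither carry out that core nor can the route you do develop be repaired without it; as written, the optimality half of the lemma is not established.
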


\begin{proof}
Let $S$ and $O$ be respectively the solutions found  by the Algorithm \textbf{$(\lambda,A)-$TSS on Tree} and an optimal solution.
For each $v \in V$  let $S(v) = S \cap T(v)$ (resp. $O(v) = O \cap T(v)$) be the set of target nodes in $S$ (resp. $O$) which belong to $T(v)$.
Let $s(u)=|S(u)|$ and $o(u)=|O(u)|$ be the cardinality of such sets. We will use the following claim.

\begin{claim}
For any vertex $v \in V$, \textbf{if} $\pths{v}{S}>\pths{v}{O}$  \textbf{ OR }   $\tms{v}{S}>\tms{v}{O}$ \textbf{then} $s(v) < o(v).$
\end{claim}

\begin{proof} We argue by induction on the height of $v.$
The claims trivially hold when $v$ is a leaf. Since our algorithm does not target any leaf (i.e. $v \notin S$), two cases need to be analyzed:
\begin{description}
	\item[case ($v \in O$):] then $s(v)=0<1=o(v)$ and  the inequality is satisfied.
	\item[case ($v \notin O$):]  then $\tms{v}{S}=\tms{v}{O}=\infty$ and $\pths{v}{S}=\pths{v}{O}=0\mbox{ or } {-}1$ depending whether $v\in A$ or not. Hence none of the two conditions of the {\bf if} are satisfied then the claim holds.
\end{description}		

\begin{sloppypar}
Now consider any internal vertex $v \in V$. By induction,  we have that  $ \forall u\in C(v), s(u)\leq o(u)$, hence
	 \begin{equation}
	 \sum_{u \in C(v)} s(u)\leq \sum_{u \in C(v)} o(u). \label{eqSO}
	 \end{equation}
	  There are four cases to consider:
\end{sloppypar}
\begin{description}

	\item[case ($v \notin S$ and $v \in O$):] Using eq. (\ref{eqSO}) we have $s(v)\leq o(v)-1<o(v),$ hence  the claim holds.
	
\begin{sloppypar}
	
	\item[case ($v \in S$ and $v \in O$):]  We have   $\pths{v}{S}=\pths{v}{O}=-1$ and $\tms{v}{S}=\tms{v}{O}=0$. Hence none of the two conditions of the
	{\bf if} statement are satisfied and the claim holds.

\end{sloppypar}

	\item[case  ($v \notin S$ and $v \notin O$):]
	
\textbf{if} $\pths{v}{S}>\pths{v}{O}$  \textbf{ OR } $\tms{v}{S}>\tms{v}{O}$  then in order to prove the claim  we need to find a child $u$ of $v$ such that $s(u)<o(u)$.

	In the following we analyze the two cases separately:
	
if  $\tms{v}{S}>\tms{v}{O}$ then  using Property  \ref{p1} we have that there is a vertex $u \in C(v)$ such that $\tms{u}{S}> \tms{u}{O}$. By induction on the height of $v$  we have that $s(u) < o(u)$.

if $\pths{v}{S}>\pths{v}{O}$ then $\pths{v}{S}\neq -1$, hence by definition of $\pths{\cdot}{X}$,
\begin{eqnarray}
\tpths{v}{S}=i<\lambda\\
\mbox{ AND } |\Acttt{S}{\lambda-i-1}{T(v)} \cap C(v)|= t(v)-1 \label{eqAct} \\
\mbox{ AND } (v \in A \mbox{ OR } \tpths{v}{S} > 0) \label{eqA}
\end{eqnarray}
		There are two cases to consider:
\begin{sloppypar}
		\begin{description}
		\item[case ($\pths{v}{O}\geq 0$):] then $\pths{v}{O}=	\tpths{v}{O} < \tpths{v}{S}$ and there is a child $u$ of $v$ such that $\pths{u}{S} = \tpths{v}{S}-1> \tpths{v}{O}-1 \geq \pths{u}{O}$ and we have found the desired vertex because by induction  we have $s(u)<o(u)$.	
			\item[case ($\pths{v}{O}=-1$):] Since $v \notin O$ and by Property \ref{p3} we know that
			$\tpths{v}{O}<\lambda$ AND $|\Acttt{O}{\lambda-\tpths{v}{O}-1}{T(v)} \cap C(v)|> t(v)-2$. Hence $\pths{v}{O}=-1$  can happen only for two reasons:
			\begin{description}
			\item[case ($v \notin A$ and  $\tpths{v}{O}=0$):] Since $v \notin A$, then by eq. \ref{eqA},  there is a children $u$ of $v$ such that $\pths{u}{S}\geq0>-1=\pths{u}{O}$  and we have found the desired vertex because by induction it holds that  $s(u)<o(u)$.
		\item[case] ($|\Acttt{O}{\lambda-\tpths{v}{O}-1}{T(v)} \cap C(v)| \geq t(v)$):
		Hence by equation \ref{eqAct} we have   $$|\Acttt{O}{\lambda-j-1}{T(v)} \cap C(v)|    > |\Acttt{S}{\lambda-i-1}{T(v)} \cap C(v)|$$ where $i=\tpths{v}{S}$ and $j=\tpths{v}{O}$.
		If $i=\tpths{v}{S}>\tpths{v}{O}=j$ (i.e., $1+\max_{u \in C(v)} \pths{u}{S}>1+\max_{u \in C(v)} \pths{u}{O}$) then there is a child $u$ of $v$ such that $\pths{u}{S}>\pths{u}{O}$ and we have found the desired vertex because by induction we have $s(u)<o(u)$. On the other hand if $i=\tpths{v}{S}\leq\tpths{v}{O}=j$, then there exists  a child $u \in C(v)$  such that, $u \in \Acttt{O}{\lambda-j-1}{T(u)}\setminus \Acttt{O}{\lambda-j-2}{T(u)} $ and  $u \notin \Acttt{S}{\lambda-i-1}{T(u)}$. Hence $\tms{u}{O}=\lambda-j-1 \leq \lambda-i-1<\tms{u}{S}$.   By induction we have that $s(u) < o(u)$.
\end{description}
\end{description}
\end{sloppypar}
	In all the cases above we are able to find the desired vertex and the claim  holds.

	\item[case ($v \in S$ and $v \notin O$):] Using eq. (\ref{eqSO}) we have $s(v)-1\leq o(v).$ For each $u\in C(v)$ we know by induction that $s(u)\leq o(u)$. Since $v \in S$ we have $\pths{v}{S}=-1$ and $\tms{v}{S}=0$, hence none of the two requirement of the {\bf if}  is satisfied hence the claim holds true.
	\end{description}
\end{proof}

We show  by induction on the height of the node  that $s(v) \leq o(v)$, for each  $v \in V$.\\
The inequality trivially holds when $v$ is a leaf. Since our algorithm does not target any leaf (i.e. $v \notin S$), we have
$s(v) = 0$. Since we have $o(v)= 0$ or $o(v) = 1$ according to whether $v$ belongs to the optimal solution,   the inequality is always satisfied.
\\
Now consider any internal node $v $. By induction,   $s(u)\leq o(u)$ for each $u\in C(v)$; hence
	 \begin{equation}
	 \sum_{u \in C(v)} s(u)\leq \sum_{u \in C(v)} o(u). \label{eqSO-body}
	 \end{equation}
It is not hard to see that if $v \in O$ by  (\ref{eqSO-body}) we immediately have
$s(v) \leq o(v).$ The same result follows from (\ref{eqSO-body}) for the case where $v$ is neither in $O$ nor in $S.$
\\
We are left with the case $v  \in S$ and $v \not \in O$
In this case eq. (\ref{eqSO-body}) only gives $s(v)-1\leq o(v).$
In order to obtain the desired result we need to find a child $u$ of $v$ such that $s(u)<o(u)$.
We distinguish the following two cases:\\
\noindent	\textbf{case ($v \notin A$ and  $\tpths{v}{O}=0$):}   Since $v \in S$ we have $v \in A'$ (that is $v \in A$ or $\tpths{v}{S}>0$). Since $v \in A'\setminus A$ then $\tpths{v}{S}>0$ and  there is a children $u$ of $v$ such that $\pths{u}{S}\geq0>-1= \pths{u}{O}= -1$  and we have found the desired vertex because by the Claim above we have $s(u)<o(u)$.\\
\noindent		\textbf{case ($v \in A$ or  $\tpths{v}{O}>0$):}
Since $v \in S$ we have that either $\tpths{v}{S} = \lambda$ or $|\A{v}|=|\Acttt{S}{\lambda-i-1}{T(v)} \cap C(v)| \leq t(v)-2$ where $i=\tpths{v}{S}$. We consider the two subcases separately:

\begin{description}
		\item [--] ($\tpths{v}{S} = \lambda$):
		 Since $v \notin O$, by Property \ref{p3},  we have that $\tpths{v}{O} < \lambda$. Hence,  there is a vertex $u \in C(v)$ such that $\pths{u}{S}=\lambda-1>\pths{u}{O}$ and we have found the desired vertex because, by the Claim we have $o(u)>s(u)$.
		\item [--] ($|\A{v}|=|\Acttt{S}{\lambda-\tpths{v}{S}-1}{T(v)} \cap C(v)| \leq t(v)-2$):
		Since $v \notin O$ by Property \ref{p3}   we have that $|\Acttt{O}{\lambda-\tpths{v}{O}-1}{T(v)} \cap C(v)|> t(v)-2$.
Hence, $|\Acttt{O}{\lambda-j-1}{T(v)} \cap C(v)|    > |\Acttt{S}{\lambda-i-1}{T(v)} \cap C(v)|$
 where $i=\tpths{v}{S}$ and $j=\tpths{v}{O}$. 		
If $i=\tpths{v}{S}>\tpths{v}{O}=j$ (i.e., $1+\max_{u \in C(v)} \pths{u}{S}>1+\max_{u \in C(v)} \pths{u}{O}$) then there is a child $u$ of $v$ such that $\pths{u}{S}>\pths{u}{O}$ and we have found the desired vertex becaus, by the Claim we have $o(u)>s(u)$. On the other hand if $i=j$, then there exists  a child $u \in C(v)$  such that, $u \in \Acttt{O}{\lambda-j-1}{T(u)}\setminus \Acttt{O}{\lambda-j-2}{T(u)} $ and  $u \notin \Acttt{S}{\lambda-i-1}{T(u)}$. Hence $\tms{u}{O}=\lambda-j-1 \leq \lambda-i-1<\tms{u}{S}$.   By the Claim we have $o(u)>s(u)$.
 		\end{description}
In all cases we have that  there is   $u \in C(v)$ with $s(u) < o(u)$. Hence $s(v) \leq o(v).$
\end{proof}

\bibliographystyle{plain}

\end{document}